\documentclass{article}
\usepackage{arxiv}
\usepackage{graphicx} 
\usepackage{epsfig} 
\usepackage{amsmath, amsfonts} 
\usepackage{amssymb}  
\usepackage{mathtools}
\usepackage{breqn}
\usepackage{float}
\usepackage{cite}
\usepackage{bm}
\usepackage{makecell}
\usepackage{subcaption}
\usepackage{caption}
\usepackage{tabularx}
\usepackage{csquotes}
\usepackage{booktabs,ragged2e}
\usepackage[flushleft]{threeparttable}

\usepackage{bigints}
\usepackage[dvipsnames]{xcolor}
\usepackage{enumitem}
\usepackage{hyperref}
\usepackage{multicol}
\usepackage{multirow}
\usepackage{threeparttable}
\usepackage{algorithm}
\usepackage{algorithmic}

\newtheorem{theorem}{Theorem}[section]

\allowdisplaybreaks

\let\boldsymbol\bm

\newcommand{\R}{\mathbb{R}}

\definecolor{codegreen}{rgb}{0,0.6,0}
\newcommand{\q}{\boldsymbol{q}}
\newcommand{\qd}{\dot{\boldsymbol{q}}}
\newcommand{\qdd}{\ddot{\boldsymbol{q}}}
\newcommand{\dq}{\boldsymbol{q}_r}

\newcommand{\ed}{\dot{\boldsymbol{e}}}

\newcommand{\norm}[1]{\lVert #1 \rVert}

\newcommand{\bs}[1]{\boldsymbol{ #1 }}
\newcommand{\ol}[1]{\overline{ #1 }}
\newcommand{\ul}[1]{\underline{ #1 }}
\newcommand{\paranthesis}[1]{\left( #1 \right)}

\begin{document}
\title{Derivative-Free Generalized Multivariable Super-Twisting Control for Constrained Euler–Lagrange Systems}

\author{Chidre Shravista Kashyap \emph{Member,~IEEE}  and  Jishnu Keshavan, \emph{Member,~IEEE}
\thanks{This research was supported in part by the MOE Grant STARS-2/2023-0265}
\thanks{The authors are with the Department of Mechanical Engineering and Robert Bosch Centre for Cyber-Physical Systems, Indian Institute of Science, Bangalore, Karnataka~560012, India (email :
chidres@iisc.ac.in, kjishnu@iisc.ac.in).}}
\maketitle

\begin{abstract}
Robotic manipulators performing payload lift-and-transfer must track prescribed trajectories under abrupt load changes, with limited actuation and inaccurate plant knowledge. Such plants are inherently described by multivariable Euler--Lagrange~(EL) dynamics with cross-coupling established through inertial and Coriolis terms. The super-twisting algorithm~(STA) is the standard
approach to regulating a first-order sliding mode~(FOSM) in such settings, yet existing designs are largely restricted to scalar sliding variables, and rest on a disturbance-derivative assumption that EL systems violate. In particular, the FOSM dynamics depend on the disturbance and become
discontinuous under the step changes in load produced by payload engagement and release, rendering that derivative ill-defined. This study resolves both with a derivative-free generalized multivariable super-twisting framework. The discontinuous sign-function integral term is replaced with a continuous fractional-power term whose dynamics are independent of the disturbance derivative, while the known state-dependent scaling of the perturbation is incorporated into both the control law and the adaptive gain update. A time-shifting barrier function guarantees prescribed-time convergence
to a user-specified bound independently of initial conditions, with the design relying only on mass-matrix norm bounds. Simulation and experimental results on a 7-DoF manipulator under actuator saturation are used to validate the proposed scheme.
\end{abstract}
\keywords{
Euler-Lagrange systems, industrial motion control, actuator saturation, prescribed-time convergence, super-twisting control} 

\section{INTRODUCTION}
\label{introduction}
Robotic manipulators executing payload lift-and-transfer operations must track prescribed
trajectories under abrupt changes in external load, with limited actuation and without
accurate knowledge of the plant. Such tasks reduce to regulating a first-order sliding
mode~(FOSM) of an uncertain Euler-Lagrange~(EL) system using a continuous and
bounded control action. EL systems are multivariable in nature, with the mass, Coriolis
and friction terms coupling all degrees of freedom, and the resulting FOSM dynamics
becomes discontinuous under the non-smooth disturbances that load acquisition and release
produce. The super-twisting algorithm~(STA) is the standard method for this
setting~\cite{Levant:1998}, with multivariable extensions
in~\cite{LopezCaamal:2019, Moreno:2022}, yet a structural incompatibility exists between
the STA and the EL system class that existing designs do not resolve. The source of this incompatibility lies in the integral term of the STA. In all existing formulations, this term contains a discontinuous sign function whose closed-loop dynamics explicitly involve the disturbance derivative. Under any non-smooth disturbance, this derivative is
ill-defined, and the standard assumption of a bounded disturbance derivative is violated. Consequently, the stability
guarantees of existing STA
designs~\cite{Levant:1998,Golkani:2018,Castillo:2016b,Seeber:2020,
Edwards_2016,Shtessel:2012,Obeid:2020,tian2019adaptive,SPSTC} cannot be certified for EL
systems under general bounded disturbances.

\begin{table*}[t]
    \centering
    \caption{Qualitative comparison of the various STA schemes considered in this study.
    ``Deriv.-free'': no assumption on the disturbance derivative is required.
    ``Presc.\ $\varepsilon$'': convergence to a user-specified bound $\varepsilon>0$ is guaranteed.
    ``Presc.\ $t_c$'': convergence time is user-specified independently of initial conditions and unknown system parameters.
    ``Feasibility'': a formal sufficient condition on control authority is derived to guarantee task feasibility. ``$\alpha$, $\beta$'': exponents of the proportional and integral terms of the STA respectively, with $\beta = 0$ denoting a discontinuous integral term.}
    \begin{tabular}{|c|c|c|c|c|c|c|c|c|}
        \hline
        Scheme & Multi- & Input & Perturbation & Deriv.- & Presc. & Presc. & Feasi- & $\alpha,\,\beta$\\
        {} & variable & sat. & bounds & free & $\varepsilon$ & $t_c$ & bility & \\
        \hline
        \hline
        \makecell{Conventional\\ STA \cite{Levant:1998}}
          & No & No & Known & No & No & No & No
          & \makecell{$\alpha=1/2,$\\$\beta=0$} \\\hline
        \makecell{Saturated\\ STA \cite{Golkani:2018,Castillo:2016b}}
          & No & Yes & Known & No & No & No & No
          & \makecell{$\alpha=1/2,$\\$\beta=0$} \\\hline
        \makecell{Conditioned \\STA \cite{Seeber:2020}}
          & No & Yes & Known & No & No & No & No
          &\makecell{ $\alpha=1/2,$\\$\beta=0$} \\\hline
        \makecell{Adaptive STA\\ \cite{Edwards_2016,Obeid:2020,tian2019adaptive,Shtessel:2012}}
          & No & No & Unknown & No & Yes & No & No
          & \makecell{$\alpha=1/2,$\\$\beta=0$} \\\hline
        \makecell{Saturation-\\tolerant STA \cite{SPSTC}}
          & No & Yes & Unknown & No & Yes & No & No
          & \makecell{$\alpha=1/2,$\\$\beta=0$} \\\hline
        \makecell{Adaptive dual-\\layer GSTA \cite{Nunes:2026}}
          & No & No & Unknown & No & No & No & No
          & \makecell{$\alpha=1/2,$\\$\beta=0$} \\\hline
        \makecell{Generalized \\STA \cite{Mei:2023}}
          & No & No & Known & No & No & No & No
          & \makecell{$\frac{1}{2}\leq\alpha<1,$\\$\beta=2\alpha{-}1$} \\\hline
        \makecell{Adaptive \\saturated STA \cite{Keshavan:2026}}
          & No & Yes & Unknown & No & Yes & No & No
          & \makecell{$\frac{1}{2}\leq\alpha<1,$\\$\beta=2\alpha{-}1$} \\
        \hline
        \hline
        \makecell{Multivariable\\ STA \cite{LopezCaamal:2019}}
          & Yes & No & Known & No & No & No & No
          & \makecell{$\alpha=1/2,$\\$\beta=0$} \\\hline
        \makecell{Multivariable\\ STA \cite{Moreno:2022}}
          & Yes & No & Known & No & No & No & No
          & \makecell{$\alpha=1/2,$\\$\beta=0$} \\\hline
       \makecell{ Adaptive \\multivariable STA \cite{Borlaug:2022}}
          & Yes & No & Unknown & No & No & No & No
          & \makecell{$\alpha=1/2,$\\$\beta=0$} \\\hline
        \hline
        \hline
        Proposed (\ref{control_policy})
          & Yes & Yes & Unknown & Yes & Yes & Yes & Yes
          & \makecell{$\frac{1}{2}<\alpha<1,$\\$\beta=2\alpha{-}1$} \\
        \hline
        \hline
    \end{tabular}
    \label{tab:qual_comparison}
\end{table*}

This limitation persists across the literature, as summarized in
Table~\ref{tab:qual_comparison}. Existing multivariable
designs~\cite{LopezCaamal:2019,Moreno:2022,Borlaug:2022} and the adaptive dual-layer
design~\cite{Nunes:2026} all retain the bounded derivative requirement. A recent
generalized STA~\cite{Mei:2023} introduces the fractional-power integral structure for scalar systems with known
bounds, and its adaptive extension~\cite{Keshavan:2026} adds saturation handling and
unknown bounds, but neither achieves prescribed-time convergence. Saturated
designs~\cite{Golkani:2018,Castillo:2016b,Seeber:2020,SPSTC} address
integral windup but remain scalar, while prescribed-time methods based on time-varying
gains~\cite{Song:2017,Song:2019} suffer from an infinite-gain singularity incompatible
with saturation. Barrier function designs~\cite{Obeid:2020,SPSTC} guarantee convergence
to a prescribed bound but cannot recover once a non-smooth disturbance drives the
trajectory outside the envelope, providing no global stability guarantee precisely in the
scenarios that motivate this work. Thus no existing design, scalar or multivariable,
simultaneously removes the derivative assumption, handles input saturation, and provides
prescribed-time convergence for multivariable EL systems. The proposed framework resolves
this by eliminating the derivative dependence at its source; the main contributions are
as follows.

\begin{enumerate}

\item \emph{Derivative-free multivariable STC for EL systems.} A continuous
fractional-power integral structure is introduced whose dynamics are independent of the
disturbance derivative, admitting all bounded disturbances --- including step changes and
discontinuous signals --- that are excluded by every existing STA
design~\cite{Levant:1998,Golkani:2018,Castillo:2016b,Seeber:2020,
Edwards_2016,Shtessel:2012,Obeid:2020,tian2019adaptive,SPSTC,LopezCaamal:2019,
Moreno:2022,Borlaug:2022,Nunes:2026}, and mitigating chattering without approximation.
The multivariable EL setting requires a Kronecker product Lyapunov structure and matrix
saturation coefficient that do not arise in prior scalar
designs~\cite{Mei:2023,Keshavan:2026}.

\item \emph{Joint adaptive and feedforward treatment of the EL perturbation structure.}
The state-dependent perturbation is handled by gain adaptation for the unknown magnitude
together with direct feedforward compensation of the known state-dependent factor. The
resulting design requires no knowledge of the Coriolis, gravity or friction terms, nor of
their bounds, and uses only norm bounds on the mass matrix, so that no identification of
the manipulator dynamics is needed before deployment. This addresses the difficulty
identified in~\cite{Moreno:2022,Roy2020,Baldi2020} for multivariable EL systems with
structured perturbations.

\item \emph{Global prescribed-time convergence via time-shifting barrier function.} A
time-shifting mechanism resets the convergence envelope whenever the trajectory escapes
the prescribed bound, ensuring global stability and recovery for all bounded
disturbances --- the class that causes prior barrier designs to
fail~\cite{Obeid:2020,SPSTC}. Convergence to $\varepsilon$ is guaranteed within $t_c$
independently of initial conditions and unknown system parameters, with sublinear scaling
of the steady-state bound --- more favorable than the linear scaling
of~\cite{Mei:2023} --- and a bounded adaptive gain that avoids the infinite-gain
singularity of~\cite{Song:2017,Song:2019}.

\item \emph{Saturation-consistent design with relaxed gain conditions.} Input saturation
is incorporated via a nonlinear equality transformation with a filtered saturation
coefficient~\cite{Shao:2022}, ensuring bounded control action and mitigating
integral windup without switching. The admissible gain conditions are strictly more
relaxed than those of~\cite{Keshavan:2026}.

\end{enumerate}

Section~\ref{section_2} presents the problem formulation and preliminaries;
Section~\ref{section_3} develops the proposed control policy and its stability analysis;
Section~\ref{section4} presents simulation and experimental results on the Franka
Research~3 manipulator; Section~\ref{section5} concludes.

\section{Preliminaries}
\label{section_2}
\subsection{Notations, Definitions and Lemmas}
\label{subsection_21}
Throughout the rest of this article, the following notations will be used. For a given vector $\boldsymbol{x} = [x_1 ,...,x_n ]^{\top}\in \mathbb{R}^n$, and for a real number $r\in\mathbb{R}$, the multivariable sign function is defined as $\lceil\boldsymbol{x}\rfloor^0=\boldsymbol{x}/||\boldsymbol{x}||$, and $\lceil \boldsymbol{x}\rfloor^r=||\boldsymbol{x}||^r \lceil\boldsymbol{x}\rfloor^0$, where $||.||$ denotes the 2-norm in $\mathbb{R}^n$. The time-derivative of the multivariable sign function is given by 
\begin{eqnarray}
    \label{sign_deriv}
    \frac{d\lceil\boldsymbol{x}\rfloor^0}{dt}=\frac{1}{||\boldsymbol{x}||}\biggl[\bs{I}-\frac{\boldsymbol{x}\boldsymbol{x}^{\top}}{||\boldsymbol{x}||^2}\biggr]\dot{\boldsymbol{x}},
\end{eqnarray}
where $\bs{I}$ is the identity matrix of dimension $n$. Finally, for a function $\boldsymbol{f}(t)\in\mathbb{R}^n\,\forall n\in[1,\infty)$, $\boldsymbol{f}(t)\in\mathcal{L}_{\infty}$ when $\sup_{t} ||\boldsymbol{f}(t)||<\infty$. 

Consider the following nonlinear dynamical system:
\begin{eqnarray}
    \label{sys_eqns_0}
    \dot{\boldsymbol{x}}=\boldsymbol{f}(\boldsymbol{x}),\,\boldsymbol{x}(0)=\boldsymbol{x}_0\in\mathbb{R}^n,
\end{eqnarray}
where $\boldsymbol{x}\in\mathbb{R}^n$ is the system state, $\boldsymbol{f}(\boldsymbol{x}):\mathbb{R}^n\rightarrow\mathbb{R}^n$ is a possibly discontinuous vector field with $\boldsymbol{f}(\boldsymbol{0})=\boldsymbol{0}$. The solutions of (\ref{sys_eqns_0}) are understood in the Filippov sense \cite{Levant:2005}.

\emph{Definition 1:} For a user-specified constant $T_c>0$ and bound  $\varepsilon>0$, system (\ref{sys_eqns_0}) is said to achieve practical prescribed-time convergence if $||\bs{x}(t)||<\varepsilon\,\forall\,t\geq T_c,\forall\,\boldsymbol{x}_0\in\mathbb{R}^n$.

\emph{Lemma 1 \cite[Theorem~4.2.12]{Horn:2012}:} If $\boldsymbol{A} \in \mathbb{R}^{p \times p}$ and $\boldsymbol{B} \in \mathbb{R}^{n \times n}$ 
are both symmetric positive definite matrices, then $\boldsymbol{A} \otimes \boldsymbol{B} 
\in \mathbb{R}^{np \times np}$ is also symmetric positive definite, with 
$\lambda_{\min}\{\boldsymbol{A} \otimes \boldsymbol{B}\} = \lambda_{\min}\{\boldsymbol{A}\}
\cdot\lambda_{\min}\{\boldsymbol{B}\}$.

\subsection{Problem Statement}
\label{subsection_22}
 Consider the class of multivariable systems represented as: 
\begin{eqnarray}
    \bs{M}(\q)\qdd + \bs{C}(\q, \qd)\qd + \bs{G}(\q) + \bs{F}(\qd) + \boldsymbol{d}(t) = \text{sat}(\bs{u}(t)),
    \label{EL_eqns}
\end{eqnarray}
where $\q:\R_{\geq0}\rightarrow\R^n$ is the generalized position coordinate, and $\qd,\ \qdd$ are the generalized velocity and acceleration coordinates respectively. Moreover, $\bs{M}(\q) \in \R^{n \times n}$ is the mass matrix, $\bs{C}(\q, \qd) \in \R^{n \times n}$ is the Coriolis matrix, $\bs{G}(\q) \in \R^n$ arises due to the gravity vector, $\bs{F}(\qd) \in \R^n$ is attributed to damping and friction forces, $\bs{d}(t) \in \R^n$ is the external disturbance, and $\bs{u} \in \R^n$ is the control input. For brevity, when a symbol’s functional dependence is clear, its arguments and brackets are omitted; e.g., $\bs{C}(\q,\qd)$ and $\bs{d}(t)$ are written as $\bs{C}$ and $\bs{d}$, respectively. 

The nonlinear saturation function $\text{sat}(\bs{u})=[\text{sat}(u_1),..,\text{sat}(u_n)]^{\top}$ is given by
\begin{align}
    \label{sat_fun_defn}
    \text{sat}(u_i)=\begin{cases}
    u_i,\,\,\,\,\,\,\,\,\,\,\,\,\,\,\,\,\text{if}\,\,\,\,\,\,|u_i|\leq \overline{u}_i\\
    \overline{u}_i\lceil u_i\rfloor^0,\,\,\,\,\,\text{if}\,\,\,\,\,\,|{u}_i|\,>\,\overline{u}_i,\,i=1,..,n,\\
    \end{cases}
\end{align}
where $\overline{{\bs{u}}}=[\overline{{u}}_1,..,\overline{{u}}_n]^{\top}>\bs{0}_n$ is the known constant input saturation bound. System (\ref{EL_eqns}) is assumed to be input-to-state stable \cite{Wen:2011}. 

These EL systems satisfy the following properties for some positive real constants $\ul{M},\ \ol{M},\ \ul{m},\ \ol{m},\ \ol{C},\ \ol{G},\ \ol{F}$ that represent bounds on the norm of the system matrices \cite[Chapter 2]{EL_props},  \cite{Robot_Control:1990}.

\emph{Property 1:} The mass matrix $\boldsymbol{M}(\boldsymbol{q})$ and its inverse $\boldsymbol{M}(\boldsymbol{q})^{-1}$ are symmetric and positive definite, which implies that 
\begin{eqnarray}
\label{prop_mass}
0{<}\ul{M}\bs{I} {\leq} \bs{M}(\bs{q}) {\leq} \ol{M}\bs{I},\,0{<}\ol{M}^{-1}\bs{I} {\leq} \bs{M}(\bs{q})^{-1} {\leq} \ul{M}^{-1}\bs{I}.
\end{eqnarray}

\emph{Property 2:} $\norm{\bs{C}}\leq \ol{C}\norm{\qd}$, $\norm{\bs{G}}\leq \ol{G}$, $\norm{\bs{F}}\leq \ol{F}\norm{\qd}$.

Properties 1-2 are naturally satisfied by most mechanical systems with Euler-Lagrangian structure (\ref{EL_eqns}). In this study, the mass matrix $\bs{M}$ is assumed to be unknown; only the bounds in (\ref{prop_mass}) are assumed known. Moreover, the rest of the matrices $\bs{C},\ \bs{G},\ \text{and}\ \bs{F}$ in (\ref{EL_eqns}), along with their bounds $\ol{C},\,\ol{G},\,\ol{F}$ are assumed to be unknown for deriving the control law. We now invoke the following assumptions:

\emph{Assumption 1:} The external disturbance $\bs{d}(t)$ is bounded by some unknown positive  constant, $\ol{d}$, i.e. $\norm{\bs{d}(t)}\leq \ol{d},\ \forall\ t\geq 0$.

\emph{Assumption 2:} For a given reference trajectory $\dq\in\mathbb{R}^n$, there exist unknown positive constants $\ol{q}_{rd},\,\ol{q}_{rdd}$ such that $\norm{\dot{\bs{q}}_r(t)} \leq \ol{q}_{rd},\,\norm{\ddot{\bs{q}}_r(t)} \leq \ol{q}_{rdd}\forall\,t\geq0$.

Let $\bs{e}(t)=\bs{q}(t)-\bs{q}_r(t)$, and define the system state as $\boldsymbol{x}(t)=[\bs{e}(t)^{\top},\,\dot{\bs{e}}(t)^{\top}]^{\top}$. We now introduce the PD sliding variable as $\bs{s}(t)=\dot{\boldsymbol{e}}(t)+\bs{\Gamma}\boldsymbol{e}(t)$, where $\bs{\Gamma}\in\mathbb{R}^{n\times n}$ is a positive constant matrix. 

We now introduce $\boldsymbol{M}_0$ as a constant symmetric positive definite matrix, and obtain the time-derivative of $\bs{s}(t)$ as,
\begin{eqnarray}
    \label{sm_dyn}
    \dot{\bs{s}}{=}\ddot{\bs{q}}-\ddot{\bs{q}}_r+\bs{\Gamma}\dot{\bs{e}}{=}
    \bs{M}_0^{-1}\text{sat}(\bs{u}(t))+\bs{\rho}(t,\boldsymbol{x}),
\end{eqnarray}
where, invoking (\ref{EL_eqns}), we have,
\begin{eqnarray}
    \label{rho_eqn}
    \boldsymbol{\rho}(t,\bs{x}){=}(\bs{I}{-}\bs{M}_0^{-1}\bs{M})\ddot{\bs{e}}{+}\bs{\Gamma}\dot{\bs{e}}{-}\bs{M}_0^{-1}\bs{C}\dot{\bs{q}}{-}\bs{M}_0^{-1}\bs{G}\nonumber\\{-}\bs{M}_0^{-1}\bs{F}{-}\bs{M}_0^{-1}\bs{d}{-}\bs{M}_0^{-1}\bs{M}\ddot{\bs{q}}_r.
\end{eqnarray}
Given the bounds on $\bs{M}(\bs{q})$ invoked in Property 1, it is always possible to design $\bs{M}_0$ as a constant diagonal matrix to satisfy the condition $||\bs{I}-\bs{M}_0^{-1}\bs{M}||<1$ \cite{Roy:2019}, thus $\bs{M}_0$ serves as a surrogate of the mass matrix in controller design. Invoking Properties 1-2, we have,

\begin{flalign}
    \norm{ \bs{\rho}(t,\bs{x})} \leq& \ol{m}\,\ol{u} {+} \paranthesis{m_0{+}\ol{m}}\paranthesis{\ol{G}{ +} \ol{d} {+} \ol{q}_{rd}\ol{F} {+} \ol{q}_{rd}^2 \ol{C}} {+} \ol{q}_{rdd} \nonumber \\
     &+ \paranthesis{\norm{\bs{\Gamma}} + (m_0+\ol{m})\paranthesis{\ol{F} + 2\ol{q}_{rd}\ol{C}}}\norm{\ed} \nonumber\\
     &+ (m_0 + \ol{m})\ol{C}\norm{\ed}^2  
     \leq\rho_0(\bs{x})\ol{\rho}. \label{rho_bound}
\end{flalign}

where $m_0=\norm{\bs{M}_0},\ol{m}{=}\ul{M}^{-1}$, $\rho_0(\bs{x}){=}\text{max}\{1,||\dot{\bs{e}}||,||\dot{\bs{e}}||^2\}$ and 
$\ol{\rho} = 3\text{max}((\ol{m}\ \ol{u} + \paranthesis{m_0+\ol{m}}\paranthesis{\ol{G} + \ol{d} + \ol{q}_{rd}\ol{F} + \ol{q}_{rd}^2 \ol{C}} + \ol{q}_{rdd}), \paranthesis{\norm{\bs{\Gamma}} + (m_0+\ol{m})\paranthesis{\ol{F} + 2\ol{q}_{rd}\ol{C}}}, (m_0 + \ol{m})\ol{C})$. Observe that $\rho_0(\boldsymbol{x})$ is an available signal for feedback satisfying $\rho_0(\bs{x})\geq 1\,\forall\bs{x}$, while $\ol{\rho}$ remains unknown. Then, given any initial state $\boldsymbol{x}(0)=[\boldsymbol{e}(0),\,\dot{\bs{e}}(0)]^{\top}$ (thus $\bs{s}(0)$), the objective of the current study is to synthesize an adaptive, generalizable and constrained STC policy $\text{sat}(\bs{u}(t))$ that drives the sliding mode $\bs{s}(t)$ to a predefined bound in a predefined time in the presence of state-dependent perturbation $\bs{\rho}(t,\boldsymbol{x})$ satisfying (\ref{rho_bound}) under limited control authority (\ref{sat_fun_defn}).

\section{Main Result}
\label{section_3}
This section develops the proposed control policy and establishes prescribed-time convergence of the FOSM (\ref{sm_dyn}) under input constraints (\ref{sat_fun_defn}) and state-dependent uncertainty $\bs{\rho}(t,\bs{x})$.

\begin{figure}
    \centering
    \includegraphics[width=0.6\linewidth]{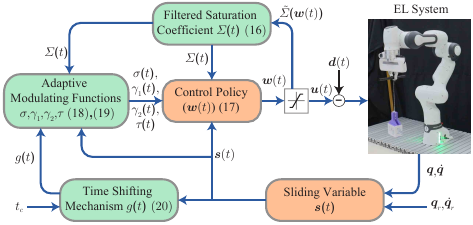}
    \caption{Illustration of the generalized multivariable super-twisting control architecture}
    \label{fig:control_architecture}
\end{figure}
\subsection{Feasibility Analysis}
For system (\ref{sm_dyn}), it may not always be possible to achieve prescribed time convergence from an arbitrarily large initial offset $||\bs{s}(0)||$ given limited control authority $\ol{\boldsymbol{u}}$. Thus, we first derive a sufficient condition that guarantees prescribed time convergence of the constrained system (\ref{sm_dyn}). To this end, we have the following Lemma.

\emph{Lemma 2:} Consider system (\ref{sm_dyn}) under input constraints (\ref{sat_fun_defn}). Then, exact sliding mode convergence is achieved in a user-prescribed convergence time $t_c$ provided the following inequality is satisfied:
\begin{eqnarray}
    \label{feasibility_cond}
    u^*\geq\frac{s^{*}}{\ul{m} t_c}+\frac{\ol{\rho}}{\ul{m}}\sup_t\{\rho_0(\bs{x}(t))\},\,s^{*}=\max_{\bs{s}(0)\in\mathcal{D}}\{||\bs{s}(0)||\},
\end{eqnarray}
where $\ul{m}=\ol{M}^{-1}$, $\mathcal{D}=\{\bs{s}|\,||\bs{s}||\leq\eta\ul{m}t_c u^{*}\}$, $u^*=\text{min}\{\ol{u}_1,..,\ol{u}_n\}$, and $0<\eta<1$.

\emph{Proof:} Consider the Lyapunov function $W=||\bs{s}(t)||$. Taking its time-derivative and using (\ref{sm_dyn}), we have,
\begin{eqnarray}
    \label{Lyap_deriv_1}
    \dot{W}(t){=}\dot{\bs{s}}^{\top}\lceil\bs{s}\rfloor^0{=}\text{sat}(\bs{u})^{\top}(\bs{M}_0^{-1})^{\top}\lceil\bs{s}\rfloor^0{+}(\lceil\bs{s}\rfloor^{0})^{\top}\boldsymbol{\rho}(t,\bs{x}).
\end{eqnarray}
Applying the discontinuous control policy $\text{sat}(\bs{u}){=}{-}u^*\lceil\bs{s}\rfloor^0$,
\begin{eqnarray}
    \label{Lyap_deriv_2}
    \dot{W}(t)&\leq&-u^*(\lceil\bs{s}\rfloor^{0})^{\top}(\bs{M}_0^{-1})^{\top}\lceil\bs{s}\rfloor^0{+}||\boldsymbol{\rho}(t,\bs{x})||\nonumber\\
    &\leq&{-}u^*\ul{m}{+}\rho_0(\bs{x})\ol{\rho}
    {\leq}{-}\eta u^*\ul{m}{-}(1{-}\eta) u^*\ul{m}{+}\rho_0(\bs{x})\ol{\rho}\nonumber\\
    &\leq&-\eta u^* \ul{m},\,\forall \ol{\rho}\leq(1-\eta)u^*\ul{m}/\sup_t\{\rho_0(\bs{x}(t))\}.
\end{eqnarray}
Clearly, the sliding mode $\bs{s}(t)$ converges to the origin in the prescribed time $t_c$ provided:
\begin{eqnarray}
    \label{ustar_cond}
    u^*\geq||\bs{s}(0)||/\eta \ul{m}t_c,\,\forall \ol{\rho}\leq(1-\eta)u^*\ul{m}/\sup_t\{\rho_0(\bs{x}(t))\}.
\end{eqnarray}
By eliminating $\eta$ from the two inequalities in (\ref{ustar_cond}), the task feasibility condition in (\ref{feasibility_cond}) follows directly. $\hfill \square$

\emph{Remark 1:} Observe that the set $\mathcal{D}$ characterizes the initial conditions from which any
closed-loop trajectory reaches the origin within the prescribed time under control
authority $u^{*}$. Substituting $s^{*}=\eta\ul{m}t_c u^{*}$ into (\ref{feasibility_cond})
gives the explicit non-circular condition
\begin{eqnarray}
    \label{feas_cond_1}
    u^*\geq u_{\text{min}}:=\ol{\rho}\sup_t\{\rho_0(\bs{x}(t))\}/(1-\eta)\ul{m}.
\end{eqnarray}
As $\ol{\rho}\sup_t\{\rho_0(\bs{x}(t))\}$ is unknown under Assumptions 1--2,
(\ref{feas_cond_1}) serves as a design guideline: increasing $t_c$ expands $\mathcal{D}$
without affecting $u_{\text{min}}$, while decreasing $\eta$ reduces $u_{\text{min}}$ at
the cost of a smaller domain of attraction $\mathcal{D}$.

\subsection{Adaptive Multivariable Super-Twisting Control Design}
We now present the design of an adaptive, generalized multivariable super-twisting control policy that adheres to prescribed input constraints. To this end, a nonlinear equality transformation is used to convert the constrained input, $\text{sat}(\bs{u}(t))$, into an unconstrained variable $\bs{w}(t)\in\mathbb{R}^n$ as,
\begin{eqnarray}
 \label{input_trans}
 \text{sat}(\bs{u}(t))=\tilde{\bs{\Sigma}}(\bs{w}(t))\,\bs{w}(t),
\end{eqnarray}
with the coefficient $\tilde{\bs{\Sigma}}(\bs{w})$ being a diagonal matrix ${\tilde{\bs{\Sigma}}}(\bs{w})=\text{diag}[\tilde{{\Sigma}}_i(w_i)]\,(1\leq i \leq n)$ defined as \cite{Shao:2022}
\begin{align}
    \label{sat_fun}
   \tilde{\Sigma}_i(w_i(t))=\begin{cases}
    1,& \text{if}\quad w_i(t)\in[-\ol{u}_i,\ol{u}_i]\\
    \overline{u}_i/|w_i(t)|,& \text{otherwise}.\\  
    \end{cases}
\end{align}
The coefficient $\tilde{\Sigma}_i$ is a continuous function that indicates the extent of saturation of the control input $w_i(t)$. Clearly, we have $0<\tilde{\Sigma}_i(w_i(t))\leq 1$, with a smaller value of $\tilde{\Sigma}_i$ indicating a larger extent to which the $i^{\text{th}}$ control input is saturated. In particular, according to the density property of a real number \cite{Hu:2008}, there exists a positive number ${r}$ satisfying $0<{r}\leq\tilde{\Sigma}_i(w_i)\leq1$.

Next, a low-pass filtered version of this signal $\tilde{\bs{\Sigma}}(t)$, denoted by $\bs{\Sigma}(t)$, is considered for controller synthesis below. To this end, we have the following Lemma.

\emph{Lemma 3 \cite{Keshavan:2026}:} Consider the filtered version of the signal $\tilde{\bs{\Sigma}}(\bs{w}(t))$ obtained as
\begin{eqnarray}
    \label{Sigma_lpf}
    h\dot{\bs{\Sigma}}(t)+\bs{\Sigma}(t)=\tilde{\bs{\Sigma}}(\bs{w}(t)),\,\bs{\Sigma}(0)=\tilde{\bs{\Sigma}}(\bs{w}(0)),
\end{eqnarray}
where $h$ is the filter time-constant chosen such that $0{<}h{<<}1$. Then, the filtered signal $\bs{\Sigma}(t)$ satisfies the inequality $0<r\bs{I}\leq\bs{\Sigma}(t)\leq \bs{I}$. Moreover, we have $\dot{\bs{\Sigma}}(t)\in\mathcal{L}_{\infty}$. 

\emph{Remark 2:} From (\ref{Sigma_lpf}), $\bs{\Sigma}(t)$ emulates
$\tilde{\bs{\Sigma}}(\bs{w}(t))$ increasingly closely as $h\rightarrow0$. For the filtering error
$\bs{\rho}_1(t)=[\tilde{\bs{\Sigma}}(t)-\bs{\Sigma}(t)]\bs{w}(t)=h\dot{\bs{\Sigma}}(t)
\bs{w}(t)$, boundedness of $\text{sat}(\bs{u}(t))$ by $\|\bar{\bs{u}}\|$ together with
$\tilde{\bs{\Sigma}}(t)\geq r\bs{I}>0$ from Lemma 3 gives
$\|\bs{w}(t)\|\leq\|\bar{\bs{u}}\|/r$, whence
$||\bs{\rho}_1(t)||\leq\overline{\rho}_1:=hc_{\Sigma}\|\bar{\bs{u}}\|/r$ with
$c_\Sigma=\sup_t\|\dot{\bs{\Sigma}}(t)\|$, an unknown constant vanishing as
$h\rightarrow0$.

Then, by noting that $\bs{\Sigma}(t)$ is an available signal for feedback, the generalized multivariable super-twisting control policy is now constructed in the following form: 
\begin{eqnarray}
\label{control_policy}
\bs{w}(t){=}{-}\rho_0\bs{M}_0{\bs{\Sigma}}_1^{\top}\biggl[\gamma_{1}\lceil{{\bs{s}}} \rfloor^{\alpha} {+} \int_{0}^{t} \gamma_{2}\rho_0\bs{\Sigma}_M\lceil{{\bs{s}}} \rfloor^{\beta} dt{+}\bs{\tau}\biggr],
\end{eqnarray}
with $\frac{1}{2}<\alpha<1,\,\beta=2\alpha-1$, and $\bs{\Sigma}_M(t)={\bs{\Sigma}}_1(t){\bs{\Sigma}}_1(t)^{\top}$ with ${\bs{\Sigma}}_1(t)=\bs{M}_0^{-1}\bs{\Sigma}(t)\bs{M}_0$. The adaptive gains are given by,
\begin{eqnarray}
\gamma_{1} (t) &=& \gamma_{10}{\sigma}^{\alpha} , \gamma_{2} (t) = \gamma_{20} \sigma^{2\alpha}, \nonumber\\
\gamma_{10} &>& \beta\sqrt{\frac{\gamma_{20}}{\alpha}} ,  \gamma_{20} > 0.
\label{eq:gamma}
\end{eqnarray}
\noindent The adaptive modulation functions $\sigma(t)$ and ${\tau}(t)$ are given by,
\begin{eqnarray}
\label{modulating_gain}
\dot{\sigma}(t) &=& \begin{cases} 
    \dfrac{\rho_0\|\bs{s}\|(\eta_1 + \eta_2\|\bs{\zeta}_1\|)}{g(t) - \|\bs{s}\|}\,\sigma, 
    & \text{if } \sigma(t) \geq \sigma_0, \\[10pt] 
    0, & \text{if } \sigma(t) < \sigma_0,
    \end{cases},\nonumber\\
\bs{\tau}(t)&{=}&\frac{\dot{\sigma}}{\rho_0(\bs{x})\sigma}\bs{\Sigma}_M^{-1}\,\bs{s}(t),
\end{eqnarray}
where $\sigma_0 = \sigma(0) > 0$, $\eta_1,\,\eta_2>0$, $\bs{\zeta}_1 = \sigma^\alpha\lfloor\bs{s}\rceil^\alpha$, and 

\begin{eqnarray}
\label{modulating_gain_2}
g(t)&=&\varepsilon/{\nu}(t),\,\varepsilon> 0,\nonumber\\
    \nu(t)&=&\begin{cases}
    \frac{1}{2}(1-\cos(\frac{\pi (t-t^k_1)}{t_c})),\,\,\,\,\,\,\,\text{if}\,\,\,\,\,t<t_c+t^k_1\\
    1,\,\,\,\,\,\qquad\qquad\qquad\qquad\text{if}\,\,\,\,\,t\geq t_c+t^k_1,\\  
    \end{cases}
\end{eqnarray}
where $t^k_1$ is the time instant corresponding to the $k^{\text{th}}$ instance when $\nu(t)||\bs{s}(t)||<\varepsilon$ is violated, and $t_c$ denotes the predefined convergence time.

\emph{Remark 3}: Note that $g(t)$ is designed to be a continuous monotonically decreasing function that takes an initial value of $g(0)=+\infty$ and attains a terminal value of $g(t)=\varepsilon\,\forall\,t\geq t_c$. Thus we have $||\bs{s}(0)||<g(0)$ for any arbitrary initial condition $\bs{s}(0)$, and the size of the envelope that constrains the evolution of $\bs{s}(t)$ is parameterized by the variables $\varepsilon,\,t_c$ that may be arbitrarily chosen for practical applications. However, the main drawback of previous barrier-function based approaches in \cite{SPSTC,Obeid:2020} is that the stability of the control system is compromised when $\nu||\bs{s}||\geq\varepsilon$, which possibly arises due to a sudden change in the external disturbance magnitude, so that system trajectory never returns to this bound once escaping from it. Thus, prior approaches suffer from a lack of global stability guarantee. In order to overcome this drawback, this study incorporates the time-shifting scheme in (\ref{modulating_gain_2}) which ensures that the barrier function $||\bs{s}(t)||/(g(t)-||\bs{s}(t)||)$ remains well-defined $\forall\,t\geq 0$.

\subsection{Stability analysis}
Under the proposed control scheme \eqref{control_policy}-\eqref{modulating_gain_2}, the stability analysis of system \eqref{sm_dyn} is now undertaken as the proof of the following theorem.

\begin{theorem}
    Consider the nonlinear system \eqref{sm_dyn} subject to Assumptions 1-2 and input saturation \eqref{input_trans}. Further, assume that sufficient control authority exists such that the task feasibility condition (\ref{feasibility_cond}) is satisfied. Then, with the controller gains chosen such that \eqref{eq:gamma} holds, the control policy \eqref{control_policy} with the adaptive gain update given by \eqref{modulating_gain}-(\ref{modulating_gain_2}) ensures that the sliding variable $\bs{s}(t)$ converges in a prescribed-time $t_c$ to an ultimate bound $\varepsilon>0$, so that $||\bs{s}(t)||<\varepsilon\,\forall\,t\geq t_c+t_1^k$.
\end{theorem}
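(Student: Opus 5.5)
The plan is to substitute the control policy (\ref{control_policy}) into the FOSM dynamics (\ref{sm_dyn}) and rewrite the closed loop in generalized super-twisting form. First use (\ref{input_trans}) and the filtering error of Remark 2 to write $\text{sat}(\bs{u})=\bs{\Sigma}\bs{w}+\bs{\rho}_1$. Since $\bs{M}_0^{-1}\bs{\Sigma}\bs{M}_0=\bs{\Sigma}_1$, this gives
\[
\dot{\bs{s}}=-\rho_0\bs{\Sigma}_M\bigl[\gamma_1\lceil\bs{s}\rfloor^{\alpha}+\bs{z}+\bs{\tau}\bigr]+\bs{M}_0^{-1}\bs{\rho}_1+\bs{\rho}, \qquad \dot{\bs{z}}=\gamma_2\rho_0\bs{\Sigma}_M\lceil\bs{s}\rfloor^{\beta}.
\]
The $\bs{\tau}$ term then contributes exactly $-(\dot\sigma/\sigma)\bs{s}$. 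I would introduce the scaled states $\bs{\zeta}_1=\sigma^{\alpha}\lceil\bs{s}\rfloor^{\alpha}$ and $\bs{\zeta}_2=\sigma^{\alpha}\bs{z}$. The choice of $\bs{\tau}$ is made precisely so that the $\dot\sigma$ terms in $\dot{\bs{\zeta}}_1$ cancel. Differentiating with (\ref{sign_deriv}) and $\beta=2\alpha-1$ should give a homogeneous system driven by $\rho_0\sigma\|\bs{s}\|^{\alpha-1}$. The key point is that the perturbation $\bs{\rho}+\bs{M}_0^{-1}\bs{\rho}_1$ enters only the $\bs{\zeta}_1$ channel through a bounded term of size at most $\rho_0\ol{\rho}+m_0^{-1}\ol{\rho}_1$. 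No derivative of $\bs{\rho}$ appears, because the integral term is a continuous function of $\bs{s}$ alone. This is the derivative-free property, and it is what the argument exploits.

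Next I would build a quadratic Lyapunov function $V_0=\bs{\xi}^\top(\bs{P}\otimes\bs{I}_n)\bs{\xi}$ with $\bs{\xi}=[\bs{\zeta}_1^\top,\bs{\zeta}_2^\top]^\top$ and $\bs{P}\in\R^{2\times2}$ symmetric positive definite. $\bs{P}$ is chosen through the Lyapunov equation of the nominal matrix $\begin{bmatrix}-\alpha\gamma_{10}&1\\-\gamma_{20}&0\end{bmatrix}$ (up to the $\beta$ correction), which is Hurwitz under (\ref{eq:gamma}). Lemma 1 then bounds $\lambda_{\min}$ of the Kronecker product, and Lemma 3 ($r\bs{I}\le\bs{\Sigma}\le\bs{I}$) bounds $\bs{\Sigma}_M$ above and below. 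The projection matrix $\bs{I}-\bs{s}\bs{s}^\top/\|\bs{s}\|^2$ is positive semidefinite and only helps. The expected outcome is
\[
\dot V_0\le-\rho_0\sigma\|\bs{s}\|^{\alpha-1}\bigl(c_1\|\bs{\xi}\|^2-c_2\|\bs{\xi}\|\rho_0\ol{\rho}\bigr).
\]
Once $\sigma$ is large enough, the perturbation is dominated and $\dot V_0\le -c\,\rho_0 V_0^{\alpha'}$.

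The composite Lyapunov function I plan to use is
\[
V=V_0+\frac{1}{2\eta_3}\bigl(\ln\sigma-\ln\sigma^*\bigr)^2+\ln\frac{g}{g-\|\bs{s}\|}.
\]
The barrier term has derivative containing $\|\bs{s}\|\dot g/(g(g-\|\bs{s}\|))\le0$, since $g$ is nonincreasing (Remark 3), plus terms that the adaptive law (\ref{modulating_gain}) is designed to absorb. The factor $\eta_1+\eta_2\|\bs{\zeta}_1\|$ must dominate the cross terms involving $\rho_0\ol{\rho}$ and $\|\bs{\zeta}_1\|$ that arise from $\dot{\bs{s}}^\top\lceil\bs{s}\rfloor^0$.

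\begin{enumerate}
\item Show that $V$ stays bounded on each interval $[t_1^k,t_1^{k+1})$. This yields $\|\bs{s}\|<g(t)$ and hence $\sigma\in\mathcal{L}_\infty$, unlike the time-varying gains of \cite{Song:2017}.
\item Use the feasibility condition (\ref{feasibility_cond}) and Lemma 2 to argue that the saturated input still provides net decrease of $\|\bs{s}\|$. Then $\|\bs{w}\|\le\|\ol{\bs{u}}\|/r$ keeps $\bs{\rho}_1$ small.
\item Since $g(t)=\varepsilon$ for $t\ge t_c+t_1^k$, the invariance $\|\bs{s}\|<g$ gives $\|\bs{s}\|<\varepsilon$ from that time on.
\item Address escapes caused by bounded but discontinuous $\bs{d}$. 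Each violation resets $\nu$ and sets $g=+\infty$ at $t_1^k$, so the barrier is always well defined. Finitely many resets on compact sets then yield the claim after the final $t_1^k$.
\end{enumerate}

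The hard step is closing the $\dot V$ inequality. The adaptive $\sigma$, the state-dependent scaling $\rho_0$, the non-commuting matrix $\bs{\Sigma}_M$ and the barrier all interact in it. In particular, the unknown $\ol{\rho}\rho_0$ must be dominated by the $\sigma$-growth induced through $1/(g-\|\bs{s}\|)$ before $\|\bs{s}\|$ reaches $g$. I would first try bounding $\bs{\zeta}^\top(\bs{P}\otimes\bs{\Sigma}_M)\bs{\zeta}$ from below by $r^2\lambda_{\min}(\bs{P})$ via Lemma 1. I would then argue by contradiction: if $\|\bs{s}\|\to g$, then $\dot\sigma/\sigma\to\infty$, which makes the $-(\dot\sigma/\sigma)\|\bs{s}\|^2$ term in $\dot{\bs{s}}^\top\bs{s}$ dominate and pushes $\|\bs{s}\|$ back inside.
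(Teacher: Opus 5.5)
There is a genuine gap. Your composite function $V=V_0+\frac{1}{2\eta_3}(\ln\sigma-\ln\sigma^*)^2+\ln\frac{g}{g-\norm{\bs s}}$ does not close under the adaptation law \eqref{modulating_gain}, and the ``hard step'' you defer is exactly where it fails.

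First, the envelope term has the wrong sign: $\frac{d}{dt}\ln\frac{g}{g-\norm{\bs s}}=-\frac{\norm{\bs s}\,\dot g}{g(g-\norm{\bs s})}+\frac{1}{g-\norm{\bs s}}\frac{d\norm{\bs s}}{dt}$. Since $\dot g\le 0$, the first term is nonnegative, so a shrinking envelope raises the barrier instead of helping.

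Second, $\frac{d}{dt}\frac{1}{2\eta_3}(\ln\sigma-\ln\sigma^*)^2=\frac{1}{\eta_3}(\ln\sigma-\ln\sigma^*)\frac{\rho_0\norm{\bs s}(\eta_1+\eta_2\norm{\bs\zeta_1})}{g-\norm{\bs s}}$ is positive once $\sigma>\sigma^*$ and unbounded near the barrier. The law \eqref{modulating_gain} is monotone, not a gradient law derived from your $V$, so no cross term cancels it. Nor is $\sigma\in\mathcal{L}_\infty$ available: $\dot\sigma>0$ whenever $\bs s\neq\bs 0$, and the paper obtains a bounded gain only with the dead-zone \eqref{dead_zone}.

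Third, Lemma~2 is proved for the bang-bang law $\text{sat}(\bs u)=-u^*\lceil\bs s\rfloor^0$ and says nothing about the closed loop under \eqref{control_policy}. The paper's proof never invokes \eqref{feasibility_cond}. Saturation enters only through $r\bs I\le\bs\Sigma\le\bs I$ (so $\bs Q=\bs Q_0\otimes\bs\Sigma_M>0$ by Lemma~1) and through $\bs\rho_1$, which is absorbed into $\ol{\rho}_2$.

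There are also setup slips:
\begin{itemize}
\item The scaling $\bs\zeta_2=\sigma^\alpha\bs z$ is wrong, because $\bs z$ already carries $\gamma_2=\gamma_{20}\sigma^{2\alpha}$. The homogeneous form \eqref{zeta_dyn_1} needs $\bs\zeta_2=-\bs z$. With your choice an uncancelled $\alpha(\dot\sigma/\sigma)\bs\zeta_2$ appears, and the channels carry different powers of $\sigma$.
\item The common factor is $\rho_0\sigma^\alpha\norm{\bs s}^{\alpha-1}$. After factoring it out, the perturbation is $\bs\rho_2/\rho_0$, bounded by $(2-\alpha)\ol{\rho}_2$ rather than $\rho_0\ol{\rho}$; that is what the feedforward $\rho_0$ buys.
\item The projection term does not ``only help'' once $\bs\Sigma_M$ is not a multiple of $\bs I$. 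It is the indefinite perturbation $\bs B(t,\bs\zeta)$, dominated through the $\eta_2\norm{\bs\zeta_1}$ part of the law.
\end{itemize}

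The missing idea is to weight the quadratic form by $\sigma^{-1}$: $V=\sigma^{-1}\bs\zeta^\top\bs P\bs\zeta$ with $\bs P$ from \eqref{P_defn}. Differentiating the weight produces $-(\dot\sigma/\sigma)V$, so the adaptation rate itself acts as the barrier damping. Using $\sigma^{-1}=\norm{\bs s}/\norm{\bs\zeta_1}^{1/\alpha}$, this term dominates $2q_1\ol{\rho}_2\norm{\bs\zeta}$ and $2q_2\norm{\bs\zeta}^2$ whenever $\norm{\bs s}\ge\varpi(t)$, where $\varpi(t)<g(t)$ is given by \eqref{uub_defn}. Moreover, \eqref{eq:V_lower} gives $V\ge\omega_P\sigma^{2\alpha-1}\norm{\bs s}^{2\alpha}$, which is where $\alpha>1/2$ is used.

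Your closing contradiction is a useful ingredient, but as stated it needs the remaining terms of $\bs s^\top\dot{\bs s}$ to stay bounded as $t\to T$. These include $\rho_0\bs s^\top\bs\Sigma_M\bs z$ with $\dot{\bs z}\propto\sigma^{2\alpha}$, so you need $\sigma$ bounded on $[t_1^k,T)$. The weighted $V$ supplies exactly this. Near a first contact time $T$ one has $\norm{\bs s}>\varpi$, so $V$ is non-increasing there. Hence $\sigma^{2\alpha-1}\norm{\bs s}^{2\alpha}$ stays bounded while $\norm{\bs s}\to g(T)>0$, so $\sigma$ is bounded. Then, for some constant $K$ and with $\dot g$ bounded near $T>t_1^k$, $\frac{d}{dt}(g-\norm{\bs s})\ge\dot g+\frac{\dot\sigma}{\sigma}\norm{\bs s}-K\to+\infty$, contradicting $g-\norm{\bs s}\to 0^+$.

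The paper itself infers $\sigma\to\infty$ from $\dot\sigma/\sigma\to\infty$, which does not follow in general, since $\int\dot\sigma/\sigma\,dt$ can stay finite. Combining the $V$-bound, which rules out $\sigma\to\infty$, with your direct estimate on $g-\norm{\bs s}$, which handles bounded $\sigma$, is what actually closes the contradiction.
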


\emph{Proof}: By  substituting (\ref{input_trans}) in (\ref{sm_dyn}), we have,
\begin{eqnarray}
\label{FOS_00}
\dot{\bs{s}}(t)=\bs{M}_0^{-1}\bs{\Sigma}(t)\bs{w}(t)+\tilde{\bs{\rho}}(t,\boldsymbol{x}),
\end{eqnarray}
where $\tilde{\bs{\rho}}(t,\boldsymbol{x})=\bs{M}_0^{-1}\bs{\rho}_1(t)+\bs{\rho}(t,\bs{x})$. 

By substituting $\bs{w}(t)$ from \eqref{control_policy}  in (\ref{FOS_00}), the closed-loop dynamics can be written as,
\begin{eqnarray}
\dot{\bs{s}}{=}{-}\rho_0\bs{\Sigma}_M\biggl[\gamma_{1}\lceil{\bs{s}} \rfloor^{\alpha} {+} \int_{0}^{t} \gamma_{2} \rho_0\bs{\Sigma}_M\lceil{\bs{s}} \rfloor^{\beta} dt{+}\bs{\tau}{-}\frac{\bs{\Sigma}_M^{-1}\tilde{\bs{\rho}}}{\rho_0}\biggr].
\label{FOS_1}
\end{eqnarray}

The stability of this closed-loop system is now studied for the two possible cases of $\nu(t)||\bs{s}(t)||<\varepsilon$ and $\nu(t)||\bs{s}(t)||\geq\varepsilon$. 

\emph{Case (a):} For $\nu(t)||\bs{s}(t)||{<}\varepsilon$, consider the change of variable:
\begin{eqnarray}
    \label{zeta_defn}
\boldsymbol{\zeta}{=}\left[\begin{array}{c}
\bs{\zeta}_1\\
\bs{\zeta}_2\end{array}\right]{=}\left[\begin{array}{c}
\sigma^{\alpha}\lceil{\bs{s}}\rfloor^{\alpha}\\
{-}\int_{0}^t\,\gamma_2(t)\rho_0(\boldsymbol{x})\bs{\Sigma}_M\lceil{\bs{s}}\rfloor^{\beta}\,dt\end{array}\right].
\end{eqnarray}
Now, by invoking the fact that $||\bs{s}||^{\alpha}=(\bs{s}^{\top}\bs{s})^{\alpha/2}$ , we have $\frac{d||\bs{s}||^{\alpha}}{dt}=\alpha\bs{s}^{\top}\dot{\bs{s}}/||\bs{s}||^{2-\alpha}$. Then, using \eqref{FOS_1}, the time-derivative of $\bs{\zeta}_1$ can be written as
\begin{eqnarray}
    \label{zeta1_deriv_2}
    \dot{\bs{\zeta}}_1=\alpha\frac{\dot{\sigma}}{\sigma}\bs{\zeta}_1{+}\frac{\sigma}{||\bs{\zeta}_1||^{\frac{1-\alpha}{\alpha}}}\biggl[\bs{I}-(1-\alpha)\frac{\bs{\zeta}_1\bs{\zeta}_1^{\top}}{\bs{\zeta}_1^{\top}\bs{\zeta}_1}\biggr]\dot{\bs{s}},
\end{eqnarray}
where we have invoked the fact that $||\bs{s}||^{\alpha-1}=\sigma^{1-\alpha}/||\bs{\zeta}_1||^{\frac{1-\alpha}{\alpha}}$. From \eqref{modulating_gain}, we have, 
\begin{eqnarray}
    \label{tau_defn}
\bs{\tau}{=}\frac{\dot{\sigma}}{\rho_0\sigma}\bs{\Sigma}_M^{-1}\bs{s}{=}\frac{\dot{\sigma}}{\rho_0\sigma^2}\bs{\Sigma}_M^{-1}\lceil\bs{\zeta}_1\rfloor^{1/\alpha}{=}\frac{\dot{\sigma}}{\rho_0\sigma^2}\bs{\Sigma}_M^{-1}||\bs{\zeta}_1||^{\frac{1-\alpha}{\alpha}}\bs{\zeta}_1.
\end{eqnarray}
Using (\ref{tau_defn}), we have $\bigl[\bs{I}-\bs{\zeta}_1\bs{\zeta}_1^{\top}/\bs{\zeta}_1^{\top}\bs{\zeta}_1\bigr]\bs{\Sigma}_M\,\bs{\tau}=\dot{\sigma}\bs{\zeta}_1/\sigma-\sigma\rho_0\bs{\Sigma}_M\bs{\tau}/||\bs{\zeta}_1||^{\frac{1-\alpha}{\alpha}}=\bs{0}$. Then, substituting \eqref{tau_defn} and (\ref{FOS_1}) in \eqref{zeta1_deriv_2}, it follows that
\begin{eqnarray}
    \label{zeta1_deriv_1}    \dot{\bs{\zeta}}_1=\frac{\alpha\sigma \rho_0(\boldsymbol{x})\bs{\Sigma}_M}{||\bs{\zeta}_1||^{\frac{1-\alpha}{\alpha}}}\biggl[-\gamma_{10}\bs{\zeta}_1{+}\bs{\zeta}_2\biggr]{+}\bs{\rho}_3(t,\bs{\zeta}),
\end{eqnarray}
where
\begin{eqnarray}
    \label{rho1_defn}
    \bs{\rho}_3(t,\bs{\zeta})=\frac{\sigma}{||\bs{\zeta}_1||^{\frac{1-\alpha}{\alpha}}}\biggl(\underbrace{\biggl\{\alpha\bs{I}+(1-\alpha)\biggl[\bs{I}-\frac{\bs{\zeta}_1\bs{\zeta}_1^{\top}}{\bs{\zeta}_1^{\top}\bs{\zeta}_1}\biggr]\biggr\}\tilde{\bs{\rho}}}_{\bs{\rho}_2(t,\bs{\zeta})}\nonumber\\{+}\rho_0(1-\alpha)\biggl\{\biggl[\bs{I}-\frac{\bs{\zeta}_1\bs{\zeta}_1^{\top}}{\bs{\zeta}_1^{\top}\bs{\zeta}_1}\biggr]\biggl[-\gamma_{10}\bs{\Sigma}_M\bs{\zeta}_1{+}\bs{\Sigma}_M\bs{\zeta}_2\biggr]\biggr\}\biggr).
\end{eqnarray}
As $||\bs{I}-\bs{\zeta}_1\bs{\zeta}_1^{\top}/\bs{\zeta}_1^{\top}\bs{\zeta}_1||\leq 2$ and $\rho_0(\bs{x})\geq 1$, we have $||\bs{\rho}_2(t,\bs{\zeta})||\leq (2-\alpha)\rho_0(\bs{x})\overline{\rho}_2$, where $\overline{\rho}_2(t)=\overline{\rho}+\ol{m}\sup_t\{||\bs{\rho}_1(t)||\}$. The time-derivative of $\bs{\zeta}_2$ is now obtained as:
\begin{eqnarray}
    \label{zeta2_deriv}
    \dot{\bs{\zeta}}_2{=}{-}\gamma_{20}\sigma^{2\alpha}\rho_0(\boldsymbol{x})\bs{\Sigma}_M\lceil \bs{s} \rfloor^{\beta}{=}{-}\gamma_{20}\frac{\sigma\bs{\Sigma}_M \rho_0(\boldsymbol{x})}{||\bs{\zeta}_1||^{\frac{1-\alpha}{\alpha}}}\bs{\zeta}_1.
\end{eqnarray}
Now, stacking \eqref{zeta1_deriv_1}-\eqref{zeta2_deriv}, we have,
\begin{eqnarray}
\label{zeta_dyn_1}
\dot{\boldsymbol{\zeta}}{=} \frac{\sigma\rho_0}{||\bs{\zeta}_1||^{\frac{1{-}\alpha}{\alpha}}}\biggl\{\underbrace{\left[\begin{array}{cc}
{-}\alpha\gamma_{10} & \alpha\\
{-}\gamma_{20} & 0\end{array}\right]\otimes \bs{\Sigma}_M }_{\boldsymbol{A}}\boldsymbol{\zeta}{+}\underbrace{\left[\begin{array}{c}
\bs{I}\\
\bs{O}\end{array}\right]}_{\boldsymbol{C}^{\top}}\frac{\bs{\rho}_2}{\rho_0}\nonumber\\
{+}\underbrace{(1{-}\alpha)\biggl[\bs{I}{-}\frac{\bs{\zeta}_1\bs{\zeta}_1^{\top}}{\bs{\zeta}_1^{\top}\bs{\zeta}_1}\biggr]\left[\begin{array}{cc}
{-}\gamma_{10} & 1\\
0 & 0\end{array}\right]\otimes \bs{\Sigma}_M }_{\boldsymbol{B}(t,\bs{\zeta})}\boldsymbol{\zeta}\biggr\},
\end{eqnarray}
where $\bs{O}$ is a square matrix of dimension $n$. Clearly, system \eqref{zeta_dyn_1} retains the structure of the STA while accounting for the presence of the saturation constraint $\bs{\Sigma}(t)$ and the state-dependent perturbation $\rho_0(\boldsymbol{x})$. To verify the stability of system \eqref{zeta_dyn_1}, consider the Lyapunov function candidate $V(\boldsymbol{\zeta})=\sigma^{-1}\boldsymbol{\zeta}^{\top}\boldsymbol{P}\boldsymbol{\zeta}$, where $\boldsymbol{P}$ is a constant symmetric positive definite matrix that is defined as
\begin{eqnarray}
\label{P_defn}
\boldsymbol{P}=\left[\begin{array}{cc}
\gamma_{10}+\frac{4 \gamma_{20}}{\gamma_{10}} & -1\\
-1 & \frac{2}{\gamma_{10}}\end{array}\right]\otimes\bs{I}.
\end{eqnarray}
Now, by design, $\bs{M}_0$ is a symmetric positive definite matrix, thus owing to a similarity transformation, ${\bs{\Sigma}}_1=\bs{M}_0^{-1}\bs{\Sigma}\bs{M}_0$ is a positive definite matrix, with $\lambda_{\text{min}}\{{\bs{\Sigma}_1}\}=\lambda_{\text{min}}\{{\bs{\Sigma}}\}\geq r$. It then follows that $\bs{\Sigma}_M={\bs{\Sigma}}_1{\bs{\Sigma}}_1^{\top}$ is also a symmetric positive definite matrix as ${\bs{\Sigma}}_1$ is invertible, thus there exists a constant $\omega_M$ satisfying $\omega_M=\lambda_{\text{min}}\{\bs{\Sigma}_M\}> 0$. Further, as $||\bs{\Sigma}_M||\leq||{\bs{\Sigma}}_1||^2\leq \ol{m}^2\ol{M}^2$,it is straightforward to conclude that $||\bs{B}||{\leq}\omega_B$, where $\omega_B{=}2(1{-}\alpha)\sqrt{1{+}\gamma_{10}^2} \ol{m}^2\ol{M}^2$.

The time-derivative of the Lyapunov function then satisfies:
\begin{eqnarray}
\label{V_deriv_eqn}
\dot{V}
{\leq}\frac{1}{||\bs{\zeta}_{1}||^{\frac{1-\alpha}{\alpha}}}\biggl[\rho_0\biggl\{-\boldsymbol{\zeta}^{\top}\boldsymbol{Q}\boldsymbol{\zeta}
{+}2 q_1 \overline{\rho} \,||\boldsymbol{\zeta}||{+}2q_2 ||\bs{\zeta}||^2\biggl\}\biggr]\nonumber\\{-}\rho_0\,\biggl[\frac{\eta_1\bigl||\bs{s}||}{(g(t)-||\bs{s}||)\sigma}{+}\frac{\eta_2\bigl||\bs{s}||\,||\bs{\zeta}_1||}{(g(t)-||\bs{s}||)\sigma}\biggr]\boldsymbol{\zeta}^{\top}\boldsymbol{P}\boldsymbol{\zeta},
\end{eqnarray}
where $q_1=(2-\alpha)||\boldsymbol{C}\boldsymbol{P}||$, $q_2=2\omega_B||\bs{P}||$, and 
\begin{eqnarray}
\boldsymbol{Q}&=&{-}\boldsymbol{A}^{\top} \boldsymbol{P}{-}\boldsymbol{P}\boldsymbol{A}=\bs{Q}_0\otimes\bs{\Sigma}_M,\nonumber\\\bs{Q}_0&=&\left[\begin{array}{cc}
2\alpha\gamma_{10}^2{+}8\alpha\gamma_{20}{-}2\gamma_{20} & {-}2\beta\frac{\gamma_{20}}{\gamma_{10}}{-}2\alpha{\gamma_{10}}\\
{-}2\beta\frac{\gamma_{20}}{\gamma_{10}}{-}2\alpha{\gamma_{10}} & 2\alpha\end{array}\right].\,\,\,\,\,\,
\end{eqnarray}
Then, for the choice of gains given by \eqref{eq:gamma}, it is straightforward to verify the following inequalities:
\begin{eqnarray}
\label{Sigma_eqns}
\Pi_1{=}2\alpha\gamma_{10}^2{+}8\alpha\gamma_{20}{-}2\gamma_{20}>0,\,
\Pi_2{=}\alpha\gamma_{20}{-}\beta^2\frac{\gamma_{20}^2}{\gamma_{10}^2}{>}0.
\end{eqnarray}
$\Pi_1$ and $\Pi_2$ are the principal minor determinants of $\boldsymbol{Q}_0$ respectively, so that from \eqref{Sigma_eqns}, one can conclude that the matrix $\boldsymbol{Q}_0$ is positive-definite. Since $\bs{\Sigma}_M$ is positive definite (shown above), it follows from Lemma 1 that $\bs{Q}=\bs{Q}_0\otimes\bs{\Sigma}_M$ is also positive definite. Thus, we have,
\begin{eqnarray}
\label{V_deriv_eqn_2}
\dot{V}{\leq}\frac{\rho_0}{||\bs{\zeta}_{1}||^{\frac{1-\alpha}{\alpha}}}\biggl[\biggl\{2 q_1 \overline{\rho}_2||\boldsymbol{\zeta}||{+}2q_2 ||\bs{\zeta}||^2{-}\lambda_{\text{min}}\{\boldsymbol{Q}\}||\boldsymbol{\zeta}||^2\biggl\}\biggr]\nonumber\\{-}\rho_0\biggl[\frac{\eta_1\bigl||\bs{s}||}{(g-||\bs{s}||)\sigma}{+}\frac{\eta_2||\bs{s}||\,||\bs{\zeta}_1||}{(g-||\bs{s}||)\sigma}\biggr]\lambda_{\text{min}}\{\boldsymbol{P}\}||\boldsymbol{\zeta}||^2.
\end{eqnarray}
Then, with $\omega_Q=\omega_M\lambda_{\text{min}}\{\boldsymbol{Q}_0\},\,\omega_P=\lambda_{\text{min}}\{\boldsymbol{P}\}>0$, and invoking the definition of $\bs{\zeta}_1$ in (\ref{zeta_defn}), we have $\sigma^{-1}=||\bs{s}||/||\bs{\zeta}_1||^{1/\alpha}$, so that from \eqref{V_deriv_eqn_2},
\begin{eqnarray}
    \label{V_deriv_eqn_200}
    \dot{V}{\leq}{-}\frac{\omega_Q\rho_0}{||\bs{\zeta}_{1}||^{\frac{1{-}\alpha}{\alpha}}}||\boldsymbol{\zeta}||^2
    {-}\frac{\omega_P\rho_0}{||\bs{\zeta}_{1}||^{\frac{1{-}\alpha}{\alpha}}}\biggl[
\eta_1\biggl\{\frac{\bigl||\bs{s}||^2}{g{-}||\bs{s}||}{-}\mu_1\biggr\} ||\boldsymbol{\zeta}||\nonumber\\{+}\eta_2\biggl\{
\frac{\bigl||\bs{s}||^2}{g{-}||\bs{s}||}{-}\mu_2\biggr\} ||\boldsymbol{\zeta}||^2\biggr],
\end{eqnarray}
where $\mu_1=2q_1 \overline{\rho}_2/\eta_1\,\omega_P,\,\mu_2=2q_2/\eta_2\,\omega_P>0$. Now, observe that by definition we have $\rho_0(\boldsymbol{x})\geq 1$. Thus, $\dot{V}\leq 0$ if $\frac{\bigl||\bs{s}||^2}{g(t)-||\bs{s}||}{-}\mu\geq0$ where $\mu=\text{max}\{\mu_1,\,\mu_2\}$, or equivalently, $||\bs{s}||^2+\mu||\bs{s}||-\mu g(t)\geq0$  which is a quadratic inequality in $||\bs{s}||$. Solving this inequality, it is straightforward to conclude that $\dot{V}\leq0$ provided $||\bs{s}||\geq\varpi$, where   
\begin{eqnarray}
    \label{uub_defn}
    \varpi(t)=\frac{-\mu+\sqrt{\mu^2+4\mu g(t)}}{2}.
\end{eqnarray} 

Since $\sigma(t) \geq \sigma_0 > 0\ \forall\, t \geq 0$ by \eqref{modulating_gain},
$V(t)$ remains well-defined for all $t \geq 0$, and is bounded whenever
$\|\bs{s}\| \geq \varpi(t)$. Moreover, $\mu^2 + 4\mu g(t) < (2g(t)+\mu)^2$
implies $\varpi(t) < g(t)\ \forall\, t \geq 0$. 

\noindent
We now show $\|\bs{s}(t)\| < g(t)$
for all $t \geq 0$ by contradiction. By construction, we have $||\bs{s}(0)||<g(0)=+\infty$. Then, suppose that there exists a finite
$T > 0$ with $\|\bs{s}(T)\| = g(T)$ and $\|\bs{s}(t)\| < g(t)$ on $[0,T)$.
Then $\lim_{t \to T}\dot{\sigma}/\sigma = +\infty$, so that from (\ref{modulating_gain}), it follows that $\lim_{t\rightarrow T}\sigma(t)=+\infty$.

Since $\|\bs{\zeta}_1\| =
\sigma^{\alpha}\|\bs{s}\|^{\alpha}$ from \eqref{zeta_defn}:
\begin{equation}
    \label{eq:V_lower}
    V = \sigma^{-1}\bs{\zeta}^{\top}\bs{P}\bs{\zeta}
      \geq \omega_P\sigma^{2\alpha-1}\|\bs{s}\|^{2\alpha}.
\end{equation}

For $\alpha \in (1/2, 1)$, we have $2\alpha-1>0$, so $\lim_{t\to T}V(t)=+\infty$.
Since $\varpi(T)<g(T)=\|\bs{s}(T)\|$ and both functions are 
continuous, there exists $\delta>0$ such that $\|\bs{s}(t)\|>\varpi(t)$ 
for all $t\in(T{-}\delta,T)$, so \eqref{V_deriv_eqn_200} gives 
$\dot{V}(t)\leq 0$ on $(T{-}\delta,T)$. A non-increasing $V$ 
cannot diverge to $+\infty$, yielding a contradiction. Thus, no such $T$ exists, hence $\|\bs{s}(t)\|<g(t)\ 
\forall\,t\geq 0$. Since $g(t)=\varepsilon$ for $t\geq t_c$, 
we conclude $\|\bs{s}(t)\|<\varepsilon\ \forall\,t\geq t_c$.

\emph{Case (b):} At the instant $t=t^k_1$, we have $\nu(t)||\bs{s}(t)||\geq\varepsilon$, so that $g(t^k_1)=\varepsilon/\nu(t^k_1)$, where  according to (\ref{modulating_gain_2}), $\nu(t^k_1)=0$, so that following Remark 3, $g(t^k_1)=+\infty$ and $g(t)=\varepsilon\forall\,t\geq t_c+t^k_1$. The stability analysis of the closed-loop under the proposed controller (\ref{control_policy}) then follows case (a) $\forall t\geq t^k_1$, so that $||\bs{s}(t)||<\varepsilon\,\forall\,t\geq t_c+t^k_1$. The convergence time is independent of initial conditions due to the time-varying barrier function design, ensuring prescribed-time convergence in the sense of Definition 1. This ends the proof. $\hfill \square$

\emph{Remark 4:} A less conservative estimate of the steady-state ultimate bound $\varpi(t)$ in (\ref{uub_defn}) can be obtained as $\varpi^*:=\varpi|_{g=\varepsilon}=\frac{-\mu+\sqrt{\mu^2+4\mu \varepsilon}}{2}$. Then for a sufficiently large choice of gains $\eta_1,\,\eta_2$, we have $\mu=\text{max}\{2 q_1\ol{\rho}_2/\eta_1\omega_p,\,2q_2/\eta_2\omega_P\}<<1$, so that $\varpi^*=\sqrt{\mu\varepsilon}-\frac{\mu}{2}+\mathcal{O}(\mu^{3/2})\approx\sqrt{\mu\varepsilon}$ as $\mu\rightarrow0$. Thus, to achieve a prescribed bound $\varpi^*$, the design parameter $\varepsilon$ should satisfy $\varepsilon=\varpi^{*2}/\mu$, so that this bound then depends only on gain parameters $\eta_1,\,\eta_2,\,\alpha,\omega_P$ (thus $\gamma_{10},\,\gamma_{20}$), and system parameters $\rho_2,\,q_1,\,q_2$, and not on initial conditions $\bs{s}(0)$ or prescribed-time $t_c$. Thus, the designer is free to select $t_c$ to expand the domain of attraction $\mathcal{D}$ (in accordance with Remark 1) without altering achievable steady-state accuracy. 

\emph{Remark 5:} The proposed policy \eqref{control_policy} relies on the gain 
adaptation mechanism \eqref{modulating_gain} to ensure prescribed-time convergence 
of $\bs{s}$ to the ultimate bound $\varepsilon$. However, from \eqref{modulating_gain}, it is apparent that the adaptive gain $\sigma$ is monotonically increasing, which may lead to a fairly large control action (i) even in steady-state when $0<||\bs{s}||<\varepsilon$, or (ii) even after the control input $\bs{w}(t)$ enters the saturation regime $\bs{w}\in\mathcal{W}_S$, where $\mathcal{W}_S=\{\bs{w}\in\mathbb{R}^n:\bs{\Sigma}(\bs{w})<\bs{I}\}$ denotes the saturated set, and $\mathcal{W}_{N}$ denotes the unsaturated set $\mathcal{W}_{N}=\{\bs{w}\in\mathbb{R}^n:\bs{\Sigma}(\bs{w})=\bs{I}\}$. In order to mitigate this issue, we introduce a dead-zone technique as follows \cite{Utkin:2013}:
\begin{align}
\label{dead_zone}
\dot{\sigma}(t) = \begin{cases}
\frac{\rho_0\|\bs{s}\|(\eta_1{+}\eta_2\|\bs{\zeta}_1\|)}{g(t){-}\|\bs{s}\|}\sigma, 
& \text{if } \|\bs{s}\| \geq \varepsilon,\, \bs{w}\in\mathcal{W}_N,\, 
\sigma {\geq} \sigma_0 \\
\frac{\rho_0\|\bs{s}\|(\eta_1{+}\eta_2\|\bs{\zeta}_1\|)}{g(t){-}\|\bs{s}\|}\sigma, 
& \text{if } \|\bs{s}\| \geq \varepsilon,\, \bs{w}\in\mathcal{W}_S,\, 
\sigma(t_2){=}\sigma(0) \\
0, & \text{otherwise},
\end{cases}
\end{align}
where $t_2$ is the time-instant when $\bs{w}(t)$ enters the saturation regime $\mathcal{W}_S$. Clearly, system (\ref{dead_zone}) ensures that $\sigma$ will not increase when $||\bs{s}||<\varepsilon$ or $\bs{u}(t)$ is saturated, thereby preventing unnecessary growth of the control gain. 
Under \eqref{dead_zone}, the reset $\sigma(t_2)=\sigma_0$ ensures
$\sigma(t)\geq\sigma_0>0$ for all $t\geq0$, preserving the
well-definedness of $V$. Moreover, whenever $||\bs{s}||\geq\varepsilon$
and $\bs{w}\in\mathcal{W}_N$, the adaptation in \eqref{dead_zone}
coincides with that in \eqref{modulating_gain}, so that the analysis of
Case (a) applies verbatim in the unsaturated regime. In the saturated
regime, the control action is limited by the actuation bound and the
prescribed envelope may be violated; the resulting escape is handled by
Case (b), which reinitializes $g(t)$ at the corresponding reset instant
and guarantees re-convergence to $\varepsilon$ within $t_c$ thereafter.

\emph{Remark 6:} At $\alpha = 1/2$, $\beta = 0$, the integral term in \eqref{control_policy} reduces to the discontinuous sign-function $\lceil \bs{s} \rfloor^0$ of \cite{tian2019adaptive,Obeid:2020}, reintroducing residual chattering; this case lies outside the admissible set (\ref{eq:gamma}). The restriction $\alpha \in (1/2, 1)$ ensures $\beta = 2\alpha - 1 > 0$, rendering the integral term continuous and eliminating this drawback, which is the primary motivation for the proposed scheme. Moreover, for any fixed $\alpha \in (1/2, 1)$, the admissible gain set \eqref{eq:gamma} requires only $\gamma_{20} > 0$, which is strictly more relaxed than the condition $\gamma_{20} > 1$ in \cite{Keshavan:2026}.

\emph{Remark 7:} Unlike \cite{Jimenez:2020}, note that the proposed scheme achieves prescribed-bound convergence in a user-prescribed convergence time $t_c$ that is completely decoupled from system parameters. Moreover, from Remark 4, it follows that since $\mu_1\propto \ol{\rho}_2$, the bound scales sublinearly with the perturbation bound as $\varpi^*\approx \sqrt{\ol{\rho_2}\varepsilon}$ which is more benign than the fixed-gain scheme in \cite{Mei:2023} that achieves linear scaling $\varpi^*\propto\ol{\rho}_2$. It also differs from the prescribed-time framework in \cite{Song:2017,Song:2019} in that convergence is to the ultimate bound $\varpi^*$ rather than the origin, and control gain $\sigma(t)$ remains bounded for all $t\geq 0$ via the dead-zone modification (\ref{dead_zone}), avoiding the infinite-gain singularity inherent to time-varying gain-based prescribed-time methods.

\section{Results and Discussion}\label{section4}
This section presents detailed experimental results on the Franka Research~3~(FR3)
arm, a 7-DoF serial manipulator, with comparisons against
designs~\cite{SPSTC}, \cite{bertingo_7dof}, and \cite{Moreno:2022}.

\begin{figure}
    \centering
    \includegraphics[width=0.6\linewidth]{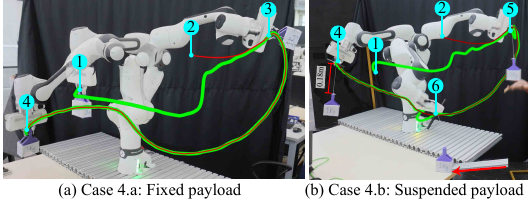}
    \caption{FR3 experimental setup: reference (red) and traced (green) end-effector paths for the $30^{\circ}$ initial-offset tracking task. The points (1--6) indicate: 1. robot's start position, 2. reference trajectory's start position, 3. trajectory error convergence at prescribed time $t_c$, 4. end of tracking, 5. release of payload, and 6. external perturbation to the load.}
    \label{fig:fr3_experimental_setup}
\end{figure}

\subsection{Experimental Validation}
The following cases are considered.
\begin{enumerate}[label =$\bullet$, ref=\arabic*]
    \item Case 1: Tracking of a minimum-jerk reference trajectory~\cite{Kyriakopoulos_Saridis_1994} from a $30^\circ$ initial offset, with $t_c=6$\,s, $\alpha=0.7$, and no external disturbance.\label{case:tracking}
    \item Case 2: As in Case~\ref{case:tracking}, with a $20^\circ$ reference trajectory reset applied to each joint at $t=13$\,s.\label{case:trajectory_reset}
    \item Case 3: Variation of $\bs{x}(0)$, $t_c$, and $\alpha$.
    \begin{enumerate} [label =$\bullet$, ref=3.\alph*]
        \item Case 3.a: Joint offsets of $10^\circ$, $20^\circ$, $30^\circ$ in $(|q_i-q_{i,r}|)$, with $\alpha = 0.7$, $t_c = 6$\,s.\label{case:initial_conditions}
        \item Case 3.b: $t_c =\{3,4,5,6\}$\,s, with $\alpha=0.7$ and fixed initial position.\label{case:exp_prescribed_time}
        \item Case 3.c: $\alpha=\{0.6,0.7,0.8,0.9\}$, with $t_c=6$\,s.\label{case:alpha}
    \end{enumerate}
    \item Case 4: External perturbation.
    \begin{enumerate} [label =$\bullet$, ref=4.\alph*]
        \item \label{case:payload_conditions} Case 4.a: Fixed payload of $0$, $0.5$, $1$\,Kg attached to the end-effector from $t=0$, with $\alpha=0.7$, $t_c=6$\,s.
        \item Case 4.b: Suspended payloads of $0.5$ and $1$\,Kg on a $0.18$\,m cable, with $\alpha=0.7$, $t_c=4$\,s. Two non-smooth events are applied to each payload: release from slack, at which the cable goes taut and transfers the load impulsively, and a subsequent manual disturbance to the suspended mass.\label{case:suspended_load}
    \end{enumerate}
\end{enumerate}

\begin{table}[htbp]
\centering
     \caption{Steady state error metrics obtained from experiments using the FR3 arm across various case studies.}
\begin{threeparttable}[b]
     \begin{tabular}{
    |>{\centering\arraybackslash}m{3.3em}|
     >{\raggedright\arraybackslash}m{9.9em}|
     >{\centering\arraybackslash}m{2em}|
     >{\centering\arraybackslash}m{2.45em}|
     >{\centering\arraybackslash}m{2em}|
     >{\centering\arraybackslash}m{2.1em}|}
     \hline
     \multirow{2}{*}{Study} & \multirow{2}{*}{Condition} & \multicolumn{2}{c|}{$\norm{\bs{e}}$ (deg)} & \multicolumn{2}{c|}{$\norm{\dot{\bs{e}}}$ (deg/s)} \\\cline{3-6}
     & & MSE\tnote{a} & RMSE\tnote{b} & MSE & RMSE \\\hline
      Case: \ref{case:tracking}& $t_1^0=0$ & 1.1 & 0.3 & 4.5 & 1.9\\\hline
      Case: \ref{case:trajectory_reset} & $t_1^1=13$& 0.4 & 0.1 & 9.4 & 2.4\\\hline
      \multirow{2}{*}{Case: \ref{case:payload_conditions}} & fixed load = 0.5Kg & 1.0 & 0.3 & 4.1 & 1.6 \\\cline{2-6}
      & fixed load = 1Kg & 0.7 & 0.2 & 5.7 & 2.4 \\\hline
      \multirow{2}{*}{Case: \ref{case:suspended_load}} & suspended load  = 0.5Kg& 1.0 & 0.5 & 6.1 & 3.3 \\\cline{2-6}
      & suspended load = 1Kg& 0.6 & 0.3 & 8.7 & 2.2 \\\hline
     \end{tabular}
     \begin{tablenotes}
    \item [a] Maximum Steady State Error (MSE) $=\max_{t\in[18,\,25]} \norm{\bs{e}(t)}$
    \item [b] Root Mean Square Error (RMSE) $= \sqrt{\sum_{t\in[18,\,25]} {\norm{\bs{e}(t)}}^2/{N}}$
    \end{tablenotes}
      \end{threeparttable}
      \label{tab:franka_exp_metrics}
 \end{table}

\begin{figure*}
    \centering
    \subfloat[ Case: \ref{case:tracking}: $\bs{q}(t)$ in rad]{\includegraphics[width=0.333\textwidth]{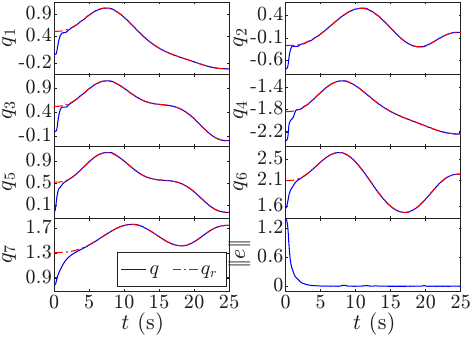}%
    \label{fig:gassta_fr3_jointPosition}}
    \hfill
    \subfloat[ Case: \ref{case:tracking}: $\dot{\bs{q}}(t)$ in rad/s]{\includegraphics[width=0.333\textwidth]{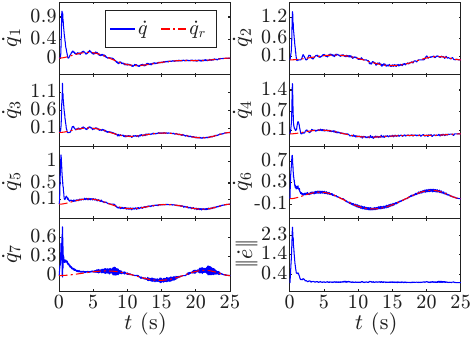}%
    \label{fig:gassta_fr3_jointVelocity}}
    \hfill
    \subfloat[ Case: \ref{case:tracking}: $\bs{u}(t)$ in Nm]{\includegraphics[width=0.333\textwidth]{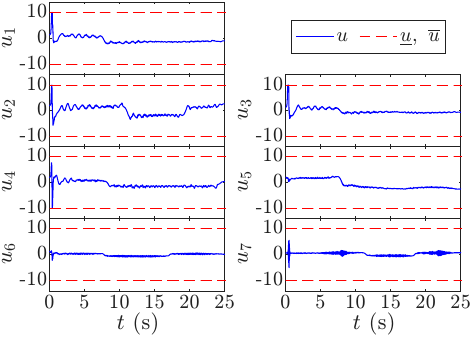}%
    \label{fig:gassta_fr3_input}}
    \hfill
    \subfloat[Case: \ref{case:trajectory_reset}: $\bs{q}(t)$ in rad]{\includegraphics[width=0.333\textwidth]{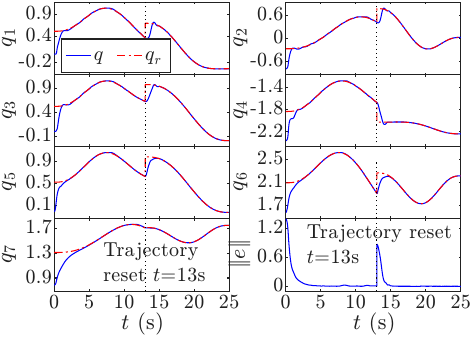}%
    \label{fig:gassta_fr3_reset_jointPosition}}
    \hfill
    \subfloat[Case: \ref{case:trajectory_reset}: $\dot{\bs{q}}(t)$ in rad/s]{\includegraphics[width=0.333\textwidth]{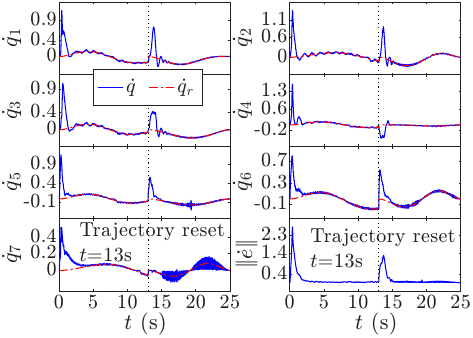}%
    \label{fig:gassta_fr3_reset_jointVelocity}}
    \hfill
    \subfloat[Case: \ref{case:trajectory_reset}: $\bs{u}(t)$ in Nm]{\includegraphics[width=0.333\textwidth]{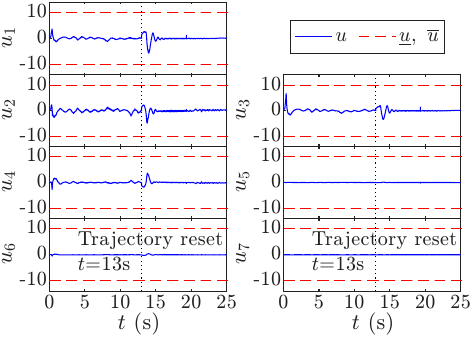}%
    \label{fig:gassta_fr3_reset_input}}
    \hfill
    \subfloat[ Case: \ref{case:tracking}: $\norm{\bs{s}(t)},\sigma(t),\gamma_1(t),\gamma_2(t)$  ]{\includegraphics[width=0.333\textwidth]{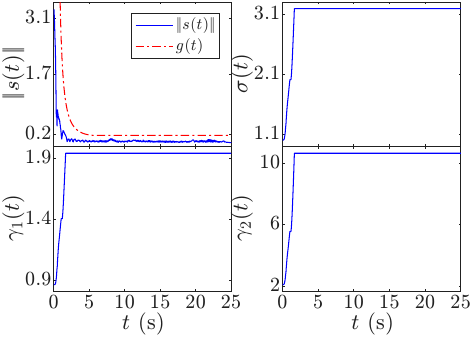}%
    \label{fig:gassta_fr3_sliding_surface}}
    \hfill
    \subfloat[Case: \ref{case:trajectory_reset}: $\norm{\bs{s}(t)},\sigma(t),\gamma_1(t),\gamma_2(t)$ ]{\includegraphics[width=0.333\textwidth]{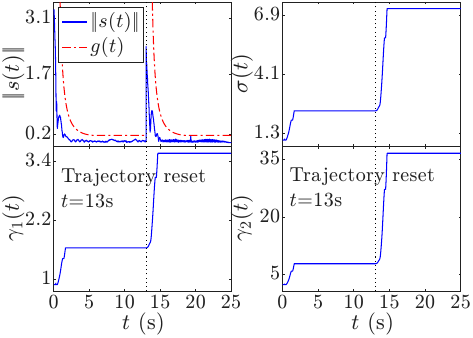}%
    \label{fig:gassta_fr3_reset_sliding_surface}}
    \subfloat[Case: \ref{case:initial_conditions}: Variation of initial conditions]
    {\includegraphics[width=0.333\textwidth]{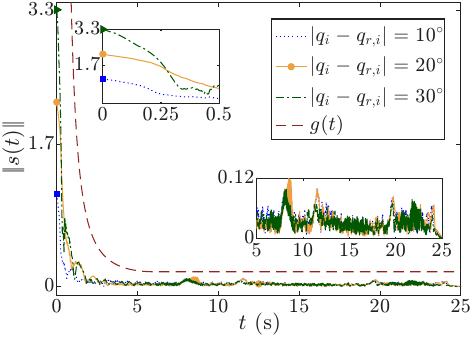}%
    \label{fig:gassta_fr3_exp_variation_initial_conditions}}
    \caption{Experimental results for the FR3 arm tracking a reference trajectory for cases \ref{case:tracking}-\ref{case:initial_conditions}, with $\alpha=0.7$ and $t_c=6$s. The dashed vertical line for case~\ref{case:trajectory_reset} results indicates the instant of reference trajectory reset at $t=13$ s.}
    \label{fig:gassta_fr3_exp_tracking_casea_b_c}
\end{figure*}

\begin{figure*}
    \centering
    \subfloat[Case: \ref{case:exp_prescribed_time}: Variation of $t_c=3,4,5,6$s]{\includegraphics[width=0.333\textwidth]{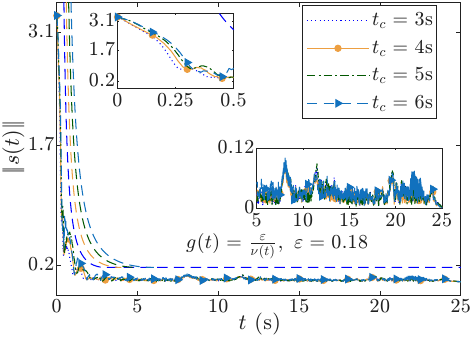}%
    \label{fig:gassta_fr3_exp_variation_prescribed_time}}
    \hfill
    \subfloat[Case: \ref{case:alpha}: control variation and $\norm{\bs{s}}$  RMSE]{\includegraphics[width=0.333\textwidth]{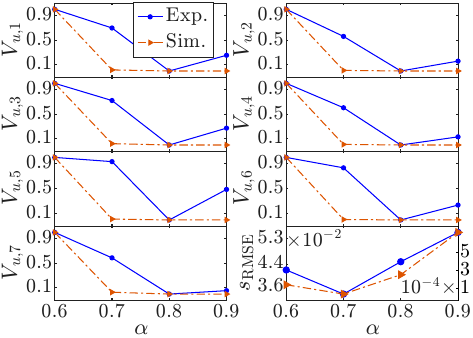}%
    \label{fig:gassta_fr3_exp_variation_alpha_input}}
    \hfill
    \subfloat[Case: \ref{case:payload_conditions}: payload variation]
    {\includegraphics[width=0.333\textwidth]{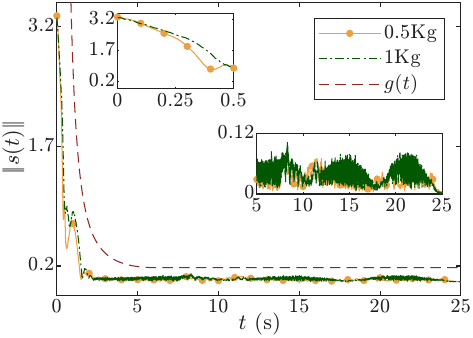}%
    \label{fig:gassta_fr3_exp_variation_payloads}}
    \hfill
    \subfloat[Case: \ref{case:suspended_load}: suspended load variation]{\includegraphics[width=0.333\textwidth]{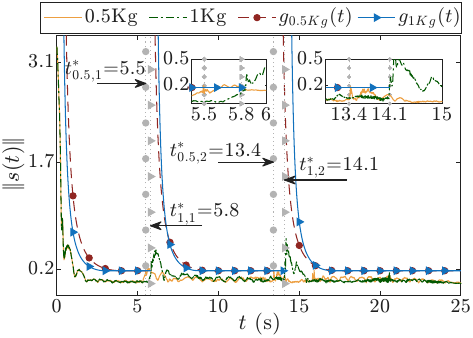}%
    \label{fig:gassta_fr3_exp_variation_suspended_load}}
    \hfill
    \subfloat[Case: \ref{case:suspended_load}: $\sigma(t),\gamma_1(t),\gamma_2(t)$]{\includegraphics[width=0.333\textwidth]{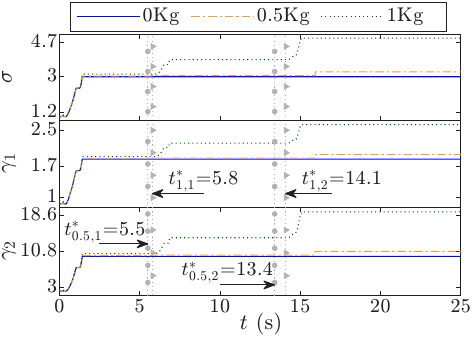}%
    \label{fig:gassta_fr3_exp_variation_suspended_alpha}}
    \subfloat[Case: \ref{case:suspended_load}: Control input in Nm]{\includegraphics[width=0.333\textwidth]{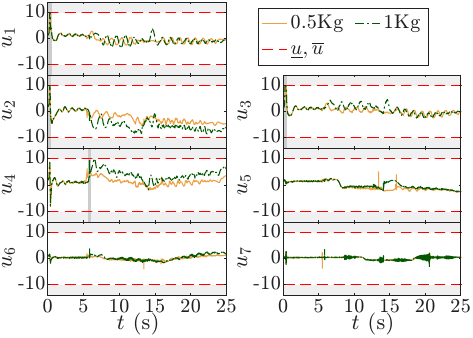}%
    \label{fig:gassta_fr3_exp_variation_suspended_load_input}}
    \hfill
    \caption{Experimental and simulation results for the FR3 arm tracking a minimum jerk reference trajectory for cases (3)--(4). Total Variation ($V_{u,i}$) is normalized per input channel for both simulation (Sim.) and experimental (Exp.) studies in Case: \ref{case:alpha}. For Case: \ref{case:suspended_load}, $t_{m,1}^*$ and $t^*_{m,2}$ denote the barrier reset instants following the taut-cable event and the manual perturbation respectively, so that $t_1^k=t_{m,n}^*$, with $m=\{0.5,\,1\}$ the load in Kg and $n=\{1,\,2\}$ the perturbation event. Input saturation occurs at or near these instants, with the dead-zone~\eqref{dead_zone} arresting gain adaptation over the saturated interval (Fig.~\ref{fig:gassta_fr3_exp_variation_suspended_alpha}); grey bands in Fig.~(f) indicate saturation.}
\end{figure*}

\emph{Nominal tracking and reference reset:} For Case~\ref{case:tracking},
Fig.~\ref{fig:gassta_fr3_sliding_surface} shows convergence of the sliding variable at
$t_c=6$\,s, with steady-state errors on the order of $1^\circ$
(Table~\ref{tab:franka_exp_metrics}) and the corresponding tracking responses in
Figs.~\ref{fig:gassta_fr3_jointPosition}--\ref{fig:gassta_fr3_jointVelocity}. This is
achieved under actuator saturation, as seen in Fig.~\ref{fig:gassta_fr3_input}. The
adaptive gain increases during the transient phase to compensate for the unknown uncertainty and
settles to a constant value in steady state, in accordance with \emph{Remark}~5. For
Case~\ref{case:trajectory_reset}, the reference reset at $t=13$\,s induces an
instantaneous tracking error from which the sliding variable re-converges
(Fig.~\ref{fig:gassta_fr3_reset_sliding_surface}), with steady-state accuracy preserved
(Table~\ref{tab:franka_exp_metrics},
Figs.~\ref{fig:gassta_fr3_reset_jointPosition}--\ref{fig:gassta_fr3_reset_jointVelocity}).

\emph{Independence from initial conditions and $t_c$:} Case~\ref{case:initial_conditions}
establishes that convergence to $\varepsilon$ occurs at the prescribed instant across
initial offsets of $10^\circ$--$30^\circ$
(Fig.~\ref{fig:gassta_fr3_exp_variation_initial_conditions}), while
Case~\ref{case:exp_prescribed_time} establishes the same across
$t_c=\{3,4,5,6\}$\,s (Fig.~\ref{fig:gassta_fr3_exp_variation_prescribed_time}). Together
these confirm the decoupling from initial conditions and system parameters asserted in
\emph{Remark}~7, delivered by the time-shifting mechanism~\eqref{modulating_gain_2} and the
adaptive gain~\eqref{modulating_gain} without knowledge of the
perturbation bounds.

\emph{Chattering reduction with $\alpha$:} Case~\ref{case:alpha} quantifies control chatter
through the total variation per input channel,
$V_{u,i} = \sum^N_{k=1}|u(t_{k+1}) - u(t_k)|/N$. Experimentally, $V_{u,i}$ decreases
monotonically over $\alpha\in[0.6,0.8]$ and rises slightly at $\alpha=0.9$
(Fig.~\ref{fig:gassta_fr3_exp_variation_alpha_input}), while remaining well below the
values at $\alpha=0.6$ and $0.7$; the rise is attributed to the larger adaptive
gain~$\gamma_1(t)$ amplifying encoder noise through $\gamma_1\lfloor\bs{s}\rceil^\alpha$
in~\eqref{control_policy}. A numerical study under identical conditions but without
measurement noise recovers monotonicity across the full range
(Fig.~\ref{fig:gassta_fr3_exp_variation_alpha_input}), confirming that the theoretical
trend holds and isolating noise as the source of the experimental deviation. Tracking
performance is essentially unaffected by $\alpha$, the RMSE of $\norm{\bs{s}}$ varying only
on the order of $10^{-2}$.

\emph{Robustness to smooth and non-smooth loads:} Case~\ref{case:payload_conditions}
applies payloads of $(0,0.5,1)$\,Kg from $t=0$
(Fig.~\ref{fig:fr3_experimental_setup}). The error metrics remain comparable to the
no-payload case (Table~\ref{tab:franka_exp_metrics},
Fig.~\ref{fig:gassta_fr3_exp_variation_payloads}), and no envelope violation occurs at
either mass: a payload applied smoothly is rejected by the compensation term $\rho_0(\bs{x})$
and the adaptive gain $\sigma(t)$ without recourse to the barrier mechanism.
Case~\ref{case:suspended_load} applies the same masses non-smoothly, and the behaviour
differs qualitatively. The taut-cable event transfers the load impulsively: the velocity
change occurs over a small number of control periods, producing a near-discontinuous
increase in $\norm{\bs{s}}$ that no bound on the disturbance derivative can accommodate.
The prescribed envelope is violated and a barrier reset is triggered at $t^*_{m,1}$
(Fig.~\ref{fig:gassta_fr3_exp_variation_suspended_load}), with the $1$\,Kg load producing a
substantially larger transient than the $0.5$\,Kg load. This is precisely the disturbance
class for which escape is terminal in the barrier designs
of~\cite{Obeid:2020,SPSTC}. Here the time-shifting
mechanism~\eqref{modulating_gain_2} reinitializes $g(t)$ at $t^1_1=t^*_{m,1}$, and
re-convergence to $\varepsilon$ follows by $t^*_c=t^1_1+t_c$ per \emph{Theorem}~3.1~(case
(b)); the manual perturbation at $t^*_{m,2}$ reproduces the same sequence
(Fig.~\ref{fig:gassta_fr3_exp_variation_suspended_alpha}). Both resets occur with multiple
input channels saturated
(Fig.~\ref{fig:gassta_fr3_exp_variation_suspended_load_input}), and the steady-state
metrics in Table~\ref{tab:franka_exp_metrics} are recovered in each instance without prior
knowledge of the perturbation bounds or their derivatives.

\subsection{Quantitative Comparisons}
This subsection compares the proposed scheme against the leading alternative
designs~\cite{SPSTC}, \cite{bertingo_7dof}, and \cite{Moreno:2022} on the FR3 arm; the
remaining schemes in Table~\ref{tab:qual_comparison} are excluded by the assumptions
recorded there. The comparison serves two purposes. The schemes in~\cite{SPSTC}
and~\cite{Moreno:2022} are included to demonstrate the consequences of the assumption
violations identified in Section~\ref{introduction}: \cite{SPSTC} is formulated for scalar
systems and does not account for multivariable coupling or state-dependent perturbations,
while \cite{Moreno:2022} requires the uncertainty to possess a bounded time derivative and
a particular matched structure. Neither hypothesis holds for the EL system considered
here. The scheme in~\cite{bertingo_7dof} is multivariable, achieves prescribed-time
tracking, and remains well-posed on this platform throughout; quantitative tracking
comparisons are accordingly made against~\cite{bertingo_7dof}.

\begin{figure*}
    \centering
    \subfloat[Simulation results for case \ref{case:tracking}]{\includegraphics[width=0.325\textwidth]{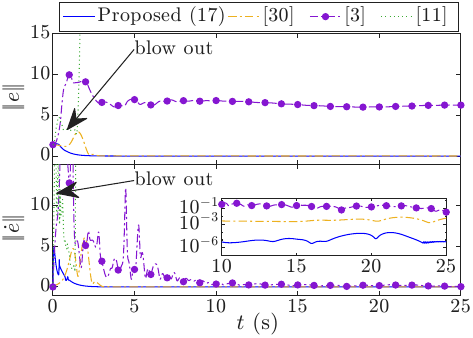}%
    \label{fig:gassta_fr3_sim_noDisturbancesStates}}
    \hfill
    \subfloat[Simulation results for case \ref{case:payload_conditions}]{\includegraphics[width=0.325\textwidth]{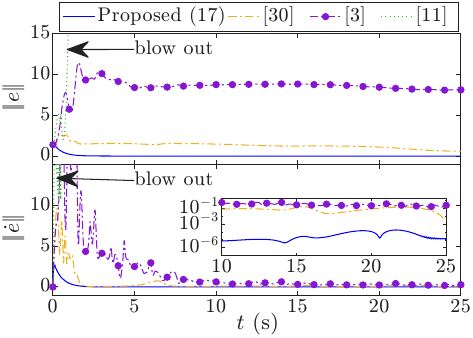}%
    \label{fig:gassta_fr3_sim_DisturbancesStates}}
    \hfill
    \subfloat[Experimental results for case \ref{case:tracking}]{\includegraphics[width=0.325\textwidth]{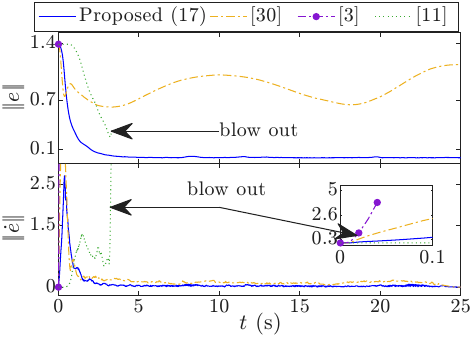}
    \label{fig:gassta_fr3_exp_comparisons}}
    \caption{Comparison of FR3 tracking using the proposed control policy~\eqref{control_policy} and the studies \cite{bertingo_7dof}, \cite{Moreno:2022}, and \cite{SPSTC}.}
    \label{fig:gassta_fr3_sim}
\end{figure*}
\subsubsection{Numerical Comparisons} \label{sec:fr3_gains} The FR3 arm with initial condition $\bs{q}(0) = [0, -\pi/4, 0, -3\pi/4, 0, \pi/2, \pi/4]^\top$ and $1$\,ms sampling period is considered, tracking a minimum jerk trajectory~\cite{Kyriakopoulos_Saridis_1994} offset by $30^\circ$ from the initial configuration. For the proposed controller, $M_0 = 2\bs{I}_7$, $\alpha=0.7$, $\gamma_{2}(0)=0.5$, $\sigma(0)=4$, $\eta_1=16$, $\eta_2=10^{-4}$, $\varepsilon=0.005$, $t_c=4$\,s, $\Gamma = 2\bs{I}_7$, $h=0.002$. For \cite{SPSTC}, $\bs{c}_1=2\bs{I}_7$, $\bs{c}_2=\bs{I}_7$, $h_1=h_2=4$, $\eta=5$, $\epsilon=0.18$. For \cite{bertingo_7dof}, $k+\theta = 0.2$, $\alpha=4$, $\zeta=0.4$, $\eta=5\times10^{-7}$, $T=2$\,s. For \cite{Moreno:2022}, $\alpha=\beta=0.01$, $b=2$, $p=0.3$, $k_1=8$, $k_2=4$.

Table~\ref{tab:franka_comparison_metrics} reports steady-state errors on the order of
$10^{-3}$ for the proposed scheme~\eqref{control_policy} in both joint position and
velocity, against $10^{-1}$ for~\cite{bertingo_7dof} and $10^{2}$
for~\cite{Moreno:2022}, corroborated by the tracking responses in
Fig.~\ref{fig:gassta_fr3_sim}. The margin over~\cite{bertingo_7dof} follows from the
absence of explicit compensation for unknown state-dependent and external perturbations in
that design, whereas the proposed adaptive gain compensates during the transient and
settles in steady state, avoiding conservative control effort. Under a $1$\,kg payload
attached to the end-effector, \cite{SPSTC} becomes unstable and \cite{Moreno:2022} tracks
poorly, both as predicted by the assumption violations noted above, while
\cite{bertingo_7dof} remains stable with noticeably degraded accuracy
(Fig.~\ref{fig:gassta_fr3_sim_DisturbancesStates}). The proposed
controller~\eqref{control_policy} maintains its nominal tracking performance, requiring
only that the lumped uncertainty be bounded in terms of known states and parametric
bounds.

\begin{table}[htbp]
     \centering
     \caption{Comparison of the MSE and RMSE for various control schemes considered in this study for the FR3 arm.}  
 \begin{tabular}{|c|c|c|c|c|c|}
     \hline
     \multirow{2}{*}{Platform}&\multirow{2}{*}{Method} & \multicolumn{2}{c|}{$\norm{\bs{e}}$ (deg)} & \multicolumn{2}{c|}{$\norm{\dot{\bs{e}}}$ (deg/s)} \\\cline{3-6}
      && MSE & RMSE & MSE & RMSE \\
     \hline
     \multirow{3}{*}{Sim. (Case \ref{case:tracking})}&\eqref{control_policy} & \textbf{0.008} & \textbf{0.003} & \textbf{0.005} & \textbf{0.002} \\\cline{2-6}
     &\cite{bertingo_7dof}  & 0.904 & 0.448 & 0.468 & 0.221 \\\cline{2-6}
     &\cite{Moreno:2022} & 393.068 & 363.829 & 161.496 & 34.683 \\\hline
    \multirow{3}{*}{Sim. (Case \ref{case:payload_conditions})}&\eqref{control_policy} & \textbf{0.019} & \textbf{0.007} & \textbf{0.011} & \textbf{0.005} \\\cline{2-6}
     &\cite{bertingo_7dof}  & 88.093 & 69.667 & 42.864 & 9.795 \\\cline{2-6}
     &\cite{Moreno:2022} & 505.040 & 489.302 & 168.603 & 35.344 \\\hline
     \multirow{2}{*}{Exp. (Case \ref{case:tracking})}&\eqref{control_policy} & \textbf{1.1} & \textbf{0.3} & \textbf{4.5} & \textbf{1.9} \\\cline{2-6}
     &\cite{bertingo_7dof}  & 65.1 & 51.7 & 10.2 & 6.4 \\\hline
     \multicolumn{6}{@{}l@{}}{\makecell[l]{%
Note: The control scheme in \cite{SPSTC} exhibits unstable performance and is\\ therefore not displayed for the simulations (Sim.).
Similarly, the controllers\\ in \cite{Moreno:2022} and \cite{SPSTC} are omitted for the experimental results (Exp.).
}}
 \end{tabular}
      \label{tab:franka_comparison_metrics}
 \end{table}

\subsubsection{Experimental comparison} The initial conditions and reference trajectory of Section~\ref{sec:fr3_gains} are used, at a $1$\,KHz control loop rate. For the proposed scheme, $M_0 = 2\bs{I}_7$, $\alpha=0.7$, $\gamma_{2}(0)=2.1$, $\sigma(0)=1$, $\eta_1=4.5$, $\eta_2=0.1$, $\varepsilon=0.18$, $t_c=6$\,s, $\Gamma = \text{diag}(2,3,2,3,3,2,1)$, $h=0.002$\,s. For \cite{SPSTC}, $\bs{c}_1=\text{diag}(2,3,2,3,3,2,1)$, $\bs{c}_2=\bs{I}_7$, $h_1=h_2=1.5$, $\eta=4$, $\epsilon=0.18$, $T=6$. For \cite{bertingo_7dof}, $k+\theta = 6$, $\alpha=8$, $\zeta=0.4$, $\eta=0.005$, $T=6$\,s. For \cite{Moreno:2022}, $\alpha=\beta=0.01$, $b=1$, $p=0.3$, $k_1=8$, $k_2=2$. The gains for \cite{SPSTC}, \cite{bertingo_7dof} and \cite{Moreno:2022} were tuned on the experimental platform to obtain the best achievable performance prior to the reported trials.

Despite this tuning, \cite{Moreno:2022} and \cite{SPSTC} diverged under nominal conditions
and were terminated on detecting the onset of instability
(Fig.~\ref{fig:gassta_fr3_exp_comparisons}), precluding a meaningful comparison under
payload perturbation; those comparisons are therefore reported in simulation above, where
all schemes remain well-posed. Against \cite{bertingo_7dof},
Table~\ref{tab:franka_comparison_metrics} shows the proposed
scheme~\eqref{control_policy} achieving substantially lower steady-state error. The margin
is wider than in the numerical study, as the experiments additionally expose joint encoder
measurement noise, to which \cite{bertingo_7dof} is comparatively more sensitive.
\section{Conclusion}
\label{section5}
This article presents a derivative-free generalized multivariable super-twisting framework for constrained Euler--Lagrange systems. Replacing the discontinuous sign-function integral term with a continuous fractional-power term removes the structural dependence on
the disturbance derivative, so that the admissible disturbance class extends to non-smooth signals that lie outside the reach of existing super-twisting designs. The state-dependent scaling of the perturbation is addressed jointly through gain adaptation and direct feedforward compensation, which together preserve stability without knowledge
of the perturbation magnitude. Input saturation is accommodated by a nonlinear equality transformation with a filtered saturation coefficient, and a time-shifting barrier mechanism restores the convergence guarantee whenever the trajectory escapes the
prescribed envelope. The resulting scheme converges in a user-prescribed time to a user-specified bound, independently of initial conditions and unknown system parameters, and detailed simulation and experimental studies on the FR3 manipulator confirm these properties under actuator saturation and non-smooth disturbances.


\end{document}